\DeclareMathOperator{\range}{range}
\DeclareMathOperator{\scope}{scope}
\DeclareMathOperator{\Angle}{angle}
\newcommand{\F}{{\cal F}}
\newcommand{\T}{{\cal T}}
\newcommand{\V}{{\cal V}}
\newtheorem{definition}{Definition}
\newtheorem{property}{Property}
\newtheorem{lemma}{Lemma}
\newtheorem{theorem}{Theorem}
\newtheorem{corollary}{Corollary}
\newenvironment{proof}{\vspace{-2mm} {\bf Proof:} \rm}{\mbox{}
\hfill $\Box$ \vspace{1ex} }
\title{Optimal Monotone Drawings of Trees
}
\date{}
\author{
Dayu He\thanks{Department of Computer Science and Engineering,
State University of New York  at Buffalo, Buffalo, NY 14260, USA. Email: dayuhe@buffalo.edu.}
\ \ \
Xin He\thanks{Department of Computer Science and Engineering,
State University of New York at Buffalo, Buffalo, NY 14260, USA. Email: xinhe@buffalo.edu.
Research supported in part by NSF Grant CCR-1319732.}
}
\begin{document}
\maketitle

\begin{abstract}
A monotone drawing of a graph $G$ is a straight-line drawing of $G$
such that, for every pair of vertices $u,w$ in $G$, there exists a
path $P_{uw}$ in $G$ that is monotone in some direction $l_{uw}$. (Namely,
the order of the orthogonal projections of the vertices of $P_{uw}$
on $l_{uw}$ is the same as the order they appear in $P_{uw}$.)

The problem of finding monotone drawings for trees has been studied
in several recent papers. The main focus is to reduce the size of the
drawing. Currently, the smallest drawing size is $O(n^{1.205}) \times
O(n^{1.205})$. In this paper, we present an algorithm for
constructing monotone drawings of trees on a grid of size
at most $12n \times 12n$. The smaller drawing size is achieved by
a new simple Path Draw algorithm, and a procedure that carefully 
assigns primitive vectors to the paths of the input tree $T$.

We also show that there exists a tree $T_0$ such that any monotone drawing
of $T_0$ must use a grid of size $\Omega(n) \times \Omega(n)$.
So the size of our monotone drawing of trees is asymptotically optimal.
\end{abstract}

\section{Introduction}\label{sec:Intro}

A \textit{straight-line drawing} of a plane graph $G$
is a drawing $\Gamma$ in which each vertex of $G$ is drawn as a distinct
point on the plane and each edge of $G$ is drawn as a line segment connecting
two end vertices without any edge crossing. A path $P$ in a straight-line
drawing $\Gamma$ is \textit{monotone} if there exists a line $l$ such that
the orthogonal projections of the vertices of $P$ on $l$ appear along $l$ in
the order they appear in $P$. We call $l$ a {\em monotone line}
(or {\em monotone direction}) of $P$. $\Gamma$ is called
a \textit{monotone drawing} of $G$ if it contains at least one monotone
path $P_{uw}$ between every pair of vertices $u,w$ of $G$. We call the
monotone direction $l_{uw}$ of $P_{uw}$ the monotone direction for $u,w$.

Monotone drawing introduced by Angelini et al. \cite{ACBFP12} is a new
visualization paradigm. 
Consider the example described in \cite{ACBFP12}: a traveler
uses a road map to find a route from a town $u$ to a town $w$.
He would like to easily spot a path connecting $u$ and $w$.
This task is harder if each path from $u$ to $w$ on the map has legs
moving away from $u$. The traveler rotates the map to better perceive its
content. Hence, even if in the original map orientation all paths
from $u$ to $w$ have annoying back and forth legs, the traveler
might be happy to find one map orientation where a path
from $u$ to $w$ smoothly goes from left to right.
This approach is also motivated by human subject experiments:
it was shown that the ``geodesic tendency'' (paths following a given
direction) is important in understanding the structure of the
underlying graphs \cite{HEH2009}.

Monotone Drawing is also closely related to several other important
graph drawing problems. In a monotone drawing, each monotone path is monotone
with respect to a different line. In an {\em upward drawing} \cite{DT88,GT01},
every directed path is monotone with respect to the positive $y$ direction.
Even more related to the monotone drawings are the {\em greedy drawings}
\cite{AFG10,ML08,PR05}. In a greedy drawing, for any two vertices $u,v$,
there exists a path $P_{uv}$ from $u$ to $v$ such that the Euclidean distance 
from an intermediate vertex of $P_{uv}$ to the destination $v$ decreases at each step.
N\"{o}llenburg et al. \cite{NPR14} observed that while getting closer to the destination, a greedy
path can make numerous turns and may even look like a spiral, which hardly matches
the intuitive notion of geodesic-path tendency.
In contrast, in a monotone drawing, there exists a path $P_{uv}$ from $u$ to $v$
(for any two vertices $u,v$) and a line $l_{uv}$ such that the Euclidean
distance from the projection of an intermediate vertex of $P_{uv}$ on $l$
to the projection of the destination $v$ on $l$ decreases at each step.
So the monotone drawing better captures the notion of geodesic-path tendency.

{\bf Related works:}
Angelini et al. \cite{ACBFP12} showed that every tree of $n$ vertices
has a monotone drawing of size $O(n^2) \times O(n)$ (using a DFS-based
algorithm), or $O(n^{\log_2 3}) \times O(n^{\log_2 3})=
O(n^{1.58}) \times O(n^{1.58})$ (using a BFS-based algorithm).
It was also shown that every biconnected planar graph of $n$
vertices has a monotone drawing in real coordinate space.
Several papers have been published since then.
The focus of the research is to identify the graph classes
having monotone drawings and, if so, to find monotone drawings for them
with size as small as possible. Angelini (with another set of authors)
\cite{ADKM+13} showed that
every planar graph has a monotone drawing of size $O(n)\times O(n^2)$.
However, their drawing is not straight line.
It may need up to $4n -10$ bends in the drawing.
Recently Hossain and Rahman \cite{HR14} showed that every planar graph has a
monotone drawing.
X. He and D. He \cite{HH15b} showed that the classical Schnyder drawing
of 3-connected plane graphs on an $O(n) \times O(n)$ grid is monotone. 

The monotone drawing problem for trees is particularly important.
Any drawing result for trees can be applied to any connected
graph $G$: First, we construct a spanning tree $T$ for $G$, then
find a monotone drawing $\Gamma$ for $T$. $\Gamma$ is automatically
a monotone drawing for $G$ (although not necessarily planar).

Both the DFS- and BFS-based tree drawing algorithms in previous papers
use the so-called Stern-Brocot tree to generate a set of $n-1$ primitive
vectors (will be defined later) in increasing order of slope. Then both
algorithms do a post-order traversal of the input tree, assign
each edge $e$ a primitive vector, and draw $e$ by using the assigned vector.
Such drawings of trees are called \textit{slope-disjoint}.
Kindermann et al. \cite{KSSW14} proposed another version of the
slope-disjoint algorithm, but using a different set of
primitive vectors (based on {\em Farey sequence}), which slightly decreases
the grid size to $O(n^{1.5}) \times O(n^{1.5})$. Recently, X. He and
D. He reduced the drawing size to $O(n^{1.205}) \times O(n^{1.205})$
by using a set of more compact primitive vectors \cite{HH15a}.

A stronger version of monotone drawings is the {\em strong monotone
drawing}: For every two vertices $u,w$ in the drawing of $G$, there must
exist a path $P_{uw}$ that is monotone with respect to the line passing through
$u$ and $w$. Since the strong monotone drawing is not a subject of this paper,
we refer readers to \cite{NPR14} for related results and references. 

{\bf Our results:}
We show that every $n$-vertex tree $T$ admits a monotone drawing on a
grid of size $12n \times 12n$, which is asymptotically optimal.

The paper is organized as follows. Section \ref{sec:Pre} introduces definitions and
preliminary results on monotone drawings.  In Section \ref{sec:tree}, we give our
algorithm for constructing monotone drawings of trees on a $12n \times 12n$ grid.
In  Section \ref{sec:lower},
we describe a tree $T_0$ and show that any monotone drawing of $T_0$ must use a grid
of size $\Omega(n) \times \Omega(n)$.
Section \ref{sec:conc} concludes the paper and discusses some open problems.

\section{Preliminaries}\label{sec:Pre}

Let $p$ be a point in the plane and $l$ be a half-line with $p$ as its
starting point. The angle of $l$, denoted by $\Angle(l)$ is the
angle spanned by a ccw (we abbreviate the word ``counterclockwise''
as ccw) rotation that brings the direction of the positive $x$-axis
to overlap with $l$. We consider angles that are equivalent modulo
$360^{\circ}$ as the same angle (e.g., $270^{\circ}$ and $-90^{\circ}$
are regarded as the same angle).

In this paper, we only consider {\em straight line drawings}
(i.e., each edge of $G$ is drawn as a straight line segment between
its end vertices.) Let $\Gamma$ be such a drawing of $G$ and
let $e=(u,w)$ be an edge of $G$. The \textit{direction} of $e$,
denoted by $d(u,w)$ or $d(e)$, is the half-line starting at $u$
and passing through $w$. The angle of an edge $(u,w)$, denoted by
$\Angle(u,w)$, is the angle of $d(u,w)$. Observe that
$\Angle(u,w) = \Angle(w,u)-180^{\circ}$.
When comparing directions and their angles, we assume that they are
applied at the origin of the axes.

Let $P(u_1,u_k)=(u_1, \ldots, u_k)$ be a path of $G$. We also use $P(u_1,u_k)$
to denote the drawing of the path in $\Gamma$. $P(u_1,u_k)$ is
\textit{monotone with respect to a direction} $l$ if the orthogonal
projections of the vertices $u_1, \ldots, u_k$ on $l$ appear in the
same order as they appear along the path. $P(u_1,u_k)$ is {\em monotone}
if it is monotone with respect to some direction. A drawing $\Gamma$
is \textit{monotone} if there exists a monotone path
$P(u,w)$ for every pair of vertices $u,w$ in $G$.

\begin{figure}[!htbp]
\begin{center}
\includegraphics[width=0.95\textwidth, angle =0]{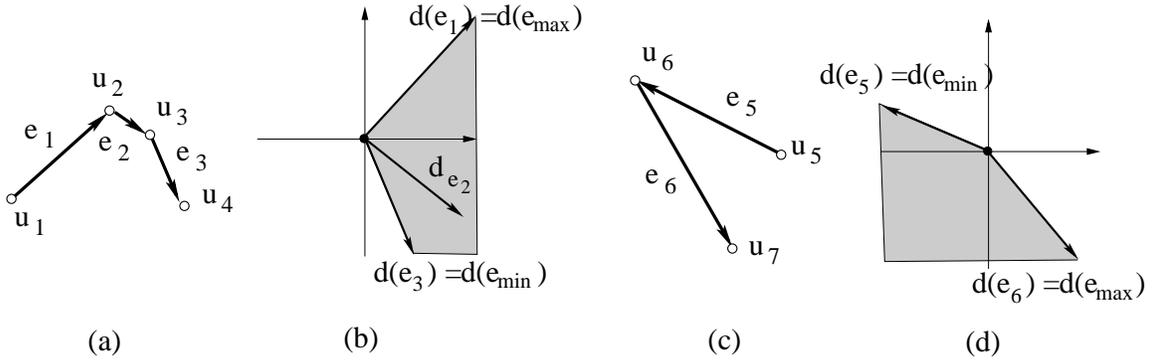}
\centering
\caption{(a) A monotone path $P(u_1,u_4)$ with extremal edges $e_1$
and $e_3$; (b) The range of $P(u_1,u_4)$ defined by $d(e_3) = d(e_{\min}) $
and $d(e_1) = d(e_{\max})$; (c) A monotone path $P(u_5,u_7)$ with only
two edges $e_5$ and $e_6$; (d) The range of $P(u_5,u_7)$ defined by
$d(e_5) = d(e_{\min}) $ and $d(e_6) = d(e_{\max})$.}
\label{Fig:path}
\end{center}
\end{figure}
\vspace{-0.1in}

The following property is well-known \cite{ACBFP12}:

\begin{property}\label{Prop:monotone}
A path $P(u_1,u_k)$ is monotone if and only if it contains two 
edges $e_1$ and $e_2$ such that the closed wedge centered at the origin
of the axes, delimited by the two half-lines $d(e_1)$ and $d(e_2)$,
and having an angle strictly smaller than $180^{\circ}$, contains
all half-lines $d(u_i, u_{i+1})$, for $i = 1, \ldots, k-1$.
\end{property}

The two edges $e_1$ and $e_2$ in Property \ref{Prop:monotone}
are called the two {\em extremal edges} of $P(u_1,u_k)$, and
the closed wedge (centered at the origin of the axes) delimited by
the two half-lines $d(e_1)$ and $d(e_2)$, containing all the half-lines
$d(u_i,u_{i+1})$ for $i=1, \ldots, k-1$, is called the \textit{range}
of $P(u_1,u_k)$ and denoted by $\range(P(u_1,u_k))$.
See Fig \ref{Fig:path} (a) and (b).
We use $e_{\min}$ and $e_{\max}$ to denote the extremal
edges $e_1$ and $e_2$ so that the wedge $\range(P(u_1,u_k))$
is the area spanned by a ccw rotation that brings
the half-line $d(e_{\min})$ to overlap with the half-line $d(e_{\max})$.
Thus we have $\Angle(e_{\min}) < \Angle(e_{\max})$.
Note that, for a path with only two edges, we consider its range to
be the closed wedge with an angle $\leq 180^{\circ}$.
See Fig \ref{Fig:path} (c) and (d). 

The closed interval $[\Angle(e_{\min}),\Angle(e_{\max})]$ is called
the \textit{scope} of $P(u_1,u_k)$ and denoted by $\scope(P(u_1,u_k))$.
Note that $\Angle(u_i,u_{i+1}) \in \scope(P(u_1,u_k))$ for all edges
$(u_i,u_{i+1})$ ($i=1, \ldots, k-1$) in $P(u_1,u_k)$.
By this definition, Property \ref{Prop:monotone} can be restated as:

\begin{property}\label{Prop:monotone-1}
A path $P(u_1,u_k)$ with scope $[\Angle(e_{\min}),\Angle(e_{\max})]$
is monotone if and only if 
$\Angle(e_{\max})- \Angle(e_{\min}) < 180^{\circ}$. 
\end{property}

Define: $P_d =\{(x,y)~|~x \mbox{ and } y \mbox{ are integers, }\mbox{gcd}(x,y)=1,
1 \leq x \leq y \leq d\}$

If we consider each entry $(x,y) \in P_d$ to be the rational number $y/x$ and
order them by value, we get the so-called {\em Farey sequence} $\F_d$
(see \cite{HW89}). The property of the Farey sequence is well understood.
It is known $|\F_d| = 3d^2/\pi^2 + O(d\log d)$ (\cite{HW89}, Thm 331).
Thus, $|P_d|=|\F_d| \geq  3d^2/\pi^2$.
Let $P'_d$ be the set of the vectors that are the reflections of the vectors
in $P_d$ through the line $x=y$. Define:
\vspace{-5pt}
\[\overline{P_d} = P_d \cup P'_d=\{(x,y)~|~x \mbox{ and } y \mbox{ are integers, }
\mbox{gcd}(x,y) =1, 1 \leq x, y \leq d\}
\]

The members of $\overline{P_d}$ are called the {\em primitive vectors
of size} $d$. Fig \ref{Fig:example} (a) shows the vectors in
$\overline{P_3}$. We have $|\overline{P_d}| \geq 6d^2/\pi^2$.
Moreover, the members of $\overline{P_d}$ can be enumerated in
$O(|\overline{P_d}|)$ time \cite{KSSW14}.
Note that the vectors $(1,0)$ and $(0,1)$ are not vectors in $\overline{P_d}$.
For easy reference, we call them the {\em boundary vectors} of $\overline{P_d}$.

\begin{figure}[!htbp]
\begin{center}
\includegraphics[width=0.85\textwidth, angle =0]{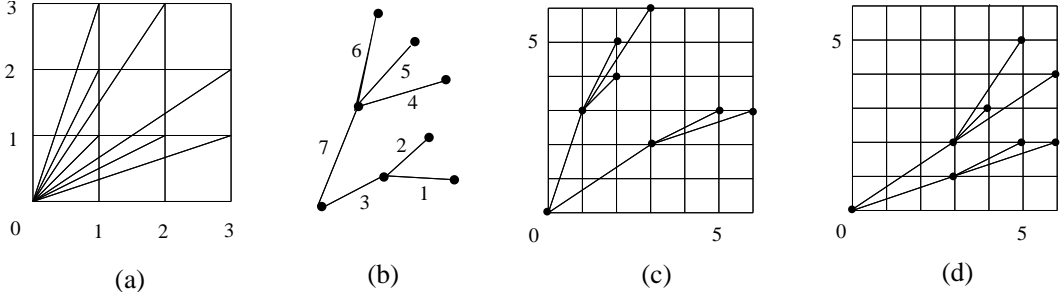}
\centering
\caption{(a) The vectors in $\overline{P_3}$; (b) a tree with edges ordered in ccw
post-order; (c) the monotone drawing of the tree in (b) produced by
Algorithm \ref{alg:draw-tree} by using vectors in (a);
(d) the monotone drawing of the tree in (b) produced by Algorithm
\ref{alg:path} by using vectors in (a).}
\label{Fig:example}
\end{center}
\end{figure}

Next, we outline the algorithm in \cite{ACBFP12} for monotone drawings of trees.

\begin{definition} \cite{ACBFP12} \label{def:disjoint}
A {\em slope-disjoint drawing} of a rooted tree $T$ is such that:
\begin{enumerate}
\item For each vertex $u$ in $T$, there exist two angles $\alpha_1(u)$
and $\alpha_2(u)$, with $0< \alpha_1(u) < \alpha_2(u) < 180^{\circ}$
such that, for every edge $e$ that is either in $T(u)$ ($T(u)$ denotes
the subtree of $T$ rooted at $u$) or that connects
$u$ with its parent, it holds that $\alpha_1(u) < \Angle(e) < \alpha_2(u)$;

\item for any vertex $u$ in $T$ and a child $v$ of $u$, it holds
that $\alpha_1(u) < \alpha_1(v) < \alpha_2(v) < \alpha_2(u)$;

\item for every two vertices $v_1,v_2$ with the same parent, it holds
that either $\alpha_1(v_1) < \alpha_2(v_1) < \alpha_1(v_2) < \alpha_2(v_2)$ or
$\alpha_1(v_2) < \alpha_2(v_2) < \alpha_1(v_1) < \alpha_2(v_1)$.
\end{enumerate}
\end{definition}

The following theorem was proved in \cite{ACBFP12}.

\begin{theorem}\label{thm:disjoint}
Every slope-disjoint drawing of a tree is monotone.
\end{theorem}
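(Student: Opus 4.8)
The plan is to exploit the unique-path structure of a tree together with the restated monotonicity criterion of Property~\ref{Prop:monotone-1}. Fix two vertices $u,w$ and let $z$ be their lowest common ancestor. The unique path $P(u,w)$ in $T$ first ascends from $u$ to $z$ through vertices $u=u_0,u_1,\ldots,u_p=z$ (each $u_{i+1}$ the parent of $u_i$) and then descends from $z$ to $w$ through $z=w_q,w_{q-1},\ldots,w_0=w$ (each $w_{j+1}$ the parent of $w_j$). I would adopt the convention, implicit in Definition~\ref{def:disjoint}, that each edge is oriented from parent to child, so that its angle lies in $(0^\circ,180^\circ)$; then showing $P(u,w)$ is monotone amounts, by Property~\ref{Prop:monotone-1}, to exhibiting a wedge of angular width strictly less than $180^\circ$ that contains the direction of every edge \emph{as it is actually traversed} along the path.

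First I would confine each half of the path to a small interval. Let $c_w=w_{q-1}$ be the child of $z$ on the way to $w$. Every descending edge either joins $c_w$ to its parent $z$ or lies inside the subtree $T(c_w)$, so by item~1 of Definition~\ref{def:disjoint} each such edge, oriented parent-to-child exactly as it is traversed here, has angle in $I_w:=(\alpha_1(c_w),\alpha_2(c_w))$. Let $c_u=u_{p-1}$ be the child of $z$ on the way to $u$. By the identical argument each ascending edge has parent-to-child angle in $I_u:=(\alpha_1(c_u),\alpha_2(c_u))$; but along $P(u,w)$ these edges are traversed from child to parent, i.e.\ in the reverse orientation, so using $\Angle(u_i,u_{i+1})=\Angle(u_{i+1},u_i)-180^\circ$ and the convention that angles are taken modulo $360^\circ$, their traversed directions lie in the shifted interval $I_u+180^\circ\subset(180^\circ,360^\circ)$. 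Both $I_u$ and $I_w$ are contained in $(\alpha_1(z),\alpha_2(z))\subset(0^\circ,180^\circ)$. Note that item~2 is not even needed for this confinement, since item~1 already captures an entire subtree.

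The crux is to combine the two groups using the disjointness guaranteed by item~3 of Definition~\ref{def:disjoint}: since $c_u\ne c_w$ are distinct children of $z$, either $\alpha_2(c_u)<\alpha_1(c_w)$ or $\alpha_2(c_w)<\alpha_1(c_u)$. In the first case all traversed directions lie in the arc running ccw from $\alpha_1(c_w)$ to $\alpha_2(c_u)+180^\circ$, whose width $\alpha_2(c_u)+180^\circ-\alpha_1(c_w)$ falls strictly below $180^\circ$ precisely because $\alpha_2(c_u)<\alpha_1(c_w)$. In the second case the containing wedge instead wraps around the direction $0^\circ$: reading the ascending directions as lying in $I_u-180^\circ\subset(-180^\circ,0^\circ)$, all traversed directions lie in the arc from $\alpha_1(c_u)-180^\circ$ ccw to $\alpha_2(c_w)$, of width $\alpha_2(c_w)-\alpha_1(c_u)+180^\circ<180^\circ$. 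Either way Property~\ref{Prop:monotone-1} applies, so $P(u,w)$ is monotone. The degenerate cases $z=u$ or $z=w$ are easier: the path is then purely descending (resp.\ ascending) and all its traversed directions already lie in a single interval of width less than $180^\circ$.

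The step I expect to be most delicate is this final combination, and in particular the bookkeeping of the $180^\circ$ reversal on the ascending half together with the wrap-around subcase. It is here that item~3 (disjointness of sibling scopes) is indispensable: merely knowing that both halves live in $(\alpha_1(z),\alpha_2(z))$ would only bound the total spread by $\alpha_2(z)-\alpha_1(z)+180^\circ$, which can exceed $180^\circ$, so the sibling gap is exactly what forces the wedge below $180^\circ$. A secondary point demanding care is fixing the parent-to-child orientation convention consistently, so that item~1 genuinely confines every traversed edge to the claimed interval after the reversal is accounted for.
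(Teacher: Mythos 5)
Your proposal is correct. One thing to be aware of: this paper does not actually prove Theorem~\ref{thm:disjoint}; it is imported by citation from Angelini et al.\ \cite{ACBFP12}, so there is no in-paper proof to compare against, and your argument serves as a self-contained replacement for that citation. Checking it on its merits: the decomposition at the lowest common ancestor $z$, the use of item~1 of Definition~\ref{def:disjoint} applied to the two children $c_u$ and $c_w$ of $z$ (which, as you observe, already confines every edge of each half-path, making item~2 dispensable), the $180^{\circ}$ shift for the ascending half, and the use of item~3 to force the combined wedge strictly below $180^{\circ}$ are all sound, and your two cases ($\alpha_2(c_u)<\alpha_1(c_w)$ versus $\alpha_2(c_w)<\alpha_1(c_u)$) are exhaustive by item~3. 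The width computations check out: in the first case the wedge from $\alpha_1(c_w)$ ccw to $\alpha_2(c_u)+180^{\circ}$ has width $\alpha_2(c_u)-\alpha_1(c_w)+180^{\circ}<180^{\circ}$, and the wrap-around bookkeeping in the second case is handled consistently. Your closing observation is also the right diagnosis of why the theorem is not trivial: without sibling disjointness the two halves could jointly span more than $180^{\circ}$, so item~3 is exactly the load-bearing hypothesis. This is essentially the argument of \cite{ACBFP12}, reconstructed correctly.
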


Remark: By Theorem \ref{thm:disjoint}, as long as the angles of the edges
in a drawing of a tree $T$ guarantee the slope-disjoint property, one can
{\em arbitrarily} assign lengths to the edges always obtaining a monotone
drawing of $T$.

Algorithm \ref{alg:draw-tree} for producing monotone drawing of
trees was described in \cite{ACBFP12}. (The presentation here
is slightly modified).

\begin{algorithm}
\caption{Tree-Monotone-Draw}\label{alg:draw-tree}
\begin{description}
\item[Input:] A tree $T=(V,E)$ with $n$ vertices.
\item[1.] Take any set $\V=\{(x_1,y_1), \ldots, (x_{n-1},y_{n-1})\}$
of $n-1$ distinct primitive vectors, sorted by increasing $y_i/x_i$
value.
\item[2.] Assign the vectors in $\V$ to the edges of $T$
in ccw post-order.
\item[3.] Draw the root $r$ of $T$ at the origin point $(0,0)$.
Then draw other vertices of $T$ in ccw pre-order as follows:
\item[3.1] Let $w$ be the vertex to be drawn next;
let $u$ be the parent of $w$ which has been drawn at the point $(x(u),y(u))$.
\item[3.2] Let $(x_i,y_i)$ be the primitive vector assigned
to the edge $(u,w)$ in step 2. Draw $w$ at the point
$(x(w),y(w))$ where $x(w)=x(u)+x_i$ and $y(w)=y(u)+y_i$.
\end{description}
\end{algorithm}

Fig \ref{Fig:example} (b) shows a tree $T$. The numbers next to the edges
indicate the order they are assigned the vectors in $\V=\overline{P_3}$. 
Fig \ref{Fig:example} (c) shows the drawing of $T$ produced by
Algorithm \ref{alg:draw-tree}.

It was shown in \cite{ACBFP12} that the drawing obtained in
Algorithm \ref{alg:draw-tree} is slope-disjoint and hence monotone.
Two versions of Algorithm \ref{alg:draw-tree} were given in \cite{ACBFP12}.
Both use the Stern-Brocot tree $\T$ to generate the vector set $\V$
needed in step 1. The BFS version of the algorithm
collects the vectors from $\T$ in a breath-first-search fashion.
This leads to a drawing of size $O(n^{\log_2 3}) \times O(n^{\log_2 3})
= O(n^{1.58}) \times O(n^{1.58})$.
The DFS version of the algorithm collects the vectors from $\T$ in a
depth-first-search fashion. This leads to a drawing of size $O(n) \times O(n^2)$.
The algorithm in \cite{KSSW14} for finding monotone drawings of trees is
essentially another version of Algorithm \ref{alg:draw-tree}. It uses
the vectors in $\overline{P_d}$ (with $d = 4 \sqrt{n}$) for the set $\V$ in step 1.
This leads to a monotone drawing of size $O(n^{1.5}) \times O(n^{1.5})$.
The algorithm in \cite{HH15a} uses a more careful vector assignment procedure.
This reduces the drawing size to $O(n^{1.205}) \times O(n^{1.205})$.

\section{Monotone Drawings of Trees on a $12n \times 12n$ Grid}
\label{sec:tree}

In this section, we describe our algorithm for optimal
monotone drawings of trees.

\subsection{Path Draw Algorithm}\label{subsection:draw-tree}

In this subsection, we present a new {\em Path Draw Algorithm}
for constructing monotone drawings of trees.
It follows the same basic ideas of Algorithm \ref{alg:draw-tree},
but will allow us to produce a monotone drawing with size $O(n) \times O(n)$.

Let $T$ be a tree rooted at $r$ with $t$ leaves. To simplify notations,
let $L=\{ 1, 2, \ldots, t\}$ denote the set of the leaves in $T$ in
ccw order. (For visualization purpose, we draw the root of $T$ at
the top and also refer ccw order as left to right order in this paper). 

\begin{definition}\label{def:path}
Let $L_{\sigma} = \{l_1,...,l_t\}$ be any permutation of the leaf set $L$ of $T$.
The {\em path decomposition with respect to $L_{\sigma}$} is the partition
of the edge set of $T$ into $t$ edge-disjoint paths, denoted by
$B_{\sigma} = \{b_1, b_2, \ldots, b_t\}$ defined as follows.
\begin{itemize}
\item $b_1$ is the path from $l_1$ to the root $r$ of $T$.
\item Suppose $b_1, \ldots, b_k$ have been defined. Let $T_k = \cup_{i=1}^k b_i$.
Let $p_{k+1}$ be the path from $l_{k+1}$ to $r$ and let $u$ be the first
vertex in $p_{k+1}$ that is also in $T_k$. Define $b_{k+1}$ as the sub-path
of $p_{k+1}$ between $l_{k+1}$ and $u$. ($u$ is called the {\em attachment of}
$l_{k+1}$ in $T_k$). 
\end{itemize}
\end{definition}

\begin{figure}[!htb]
\begin{center}
\includegraphics[width=0.55\textwidth, angle =0]{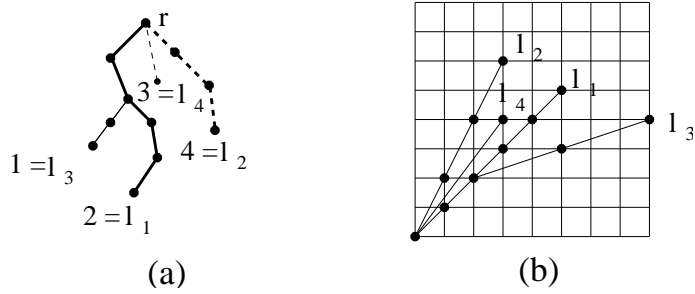}
\centering
\caption{(a) The path decomposition of a tree $T$ with respect to the
leaf-permutation $(2,4,1,3)$; (b) the drawing of $T$
produced by Algorithm \ref{alg:path} by using the vector set
$\V=\{1/3, 1/1, 4/3, 2/1 \}$.}
\label{Fig:path-draw}
\end{center}
\vspace{-0.1in}
\end{figure}

Fig \ref{Fig:path-draw} (a) shows a path decomposition of a tree with 4 leaves.
The thick solid, thick dotted, thin solid and thin dotted lines correspond to $b_1, b_2, b_3$ and $b_4$, respectively. By a slight abuse of
notation, we also use $b_k$ to denote the set of the vertices in path $b_k$ except
the last vertex (the attachment vertex of $b_k$ in $T_k$).
With this understanding, $\{ b_k~|~1 \leq k \leq t\}$ is also a partition
of the vertex set of $T$.

Our Path Draw Algorithm is described in Algorithm \ref{alg:path}.

\begin{algorithm}
\caption{Path Draw Algorithm}\label{alg:path}
\begin{description}
\item[Input:] A tree $T=(V,E)$ and a set of paths $B_{\sigma}=\{b_1,\ldots,b_t\}$
of $T$ with respect to a permutation $L_{\sigma}$ of the leaves in $T$.

\item[1.] Take any set $\V=\{(x_1,y_1), \ldots, (x_t,y_t)\}$
of $t$ distinct primitive vectors, sorted by increasing $y_i/x_i$
value.
\item[2.] Assign the vectors in $\V$ to the leaves of $T$ in ccw order. 
(For example, in Fig \ref{Fig:path-draw} (a), the leaves $l_3, l_1, l_4, l_2$
are assigned the 1st, 2nd, 3rd and the 4th vector in $\V$.)

\item[2.1.] Let $(x,y)$ be the vector assigned to $l_j$ in step 2.
Assign the vector $(x,y)$ to all edges in $b_j$. (We say the vector
$(x,y)$ is assigned to the path $b_j$). Do the same for every path
$b_j$ in $B$. Every edge in $T$ is assigned a vector in $\V$ by now.

\item[3.] Draw the vertices of $T$ as in step 3 of Algorithm \ref{alg:draw-tree}.
\end{description}
\end{algorithm}

Fig \ref{Fig:example} (d) shows the drawing of the tree in Fig
\ref{Fig:example} (a) produced by Algorithm \ref{alg:path}
(by using the vectors in $\overline{P_3}$, and the permutation that lists the
leaves in ccw order). Fig \ref{Fig:path-draw} (b)
shows the drawing of the tree in Fig \ref{Fig:path-draw}
(a) by Algorithm \ref{alg:path} by using the vector set
$\V=\{1/3, 1/1, 4/3, 2/1 \}$.

\begin{theorem}\label{thm:path}
Algorithm \ref{alg:path} produces a monotone drawing of a tree $T$
for any permutation $L_{\sigma}$ of the leaves of $T$.
\end{theorem}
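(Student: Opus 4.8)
The plan is to show that the drawing produced by Algorithm \ref{alg:path} is slope-disjoint, so that monotonicity follows immediately from Theorem \ref{thm:disjoint}. Recall that in Algorithm \ref{alg:path} every edge on a single path $b_j$ receives the \emph{same} primitive vector $(x_j,y_j)$, and the paths are indexed so that the leaves $l_1,\dots,l_t$ are assigned the vectors in $\V$ according to the ccw (left-to-right) order of the leaves. The key structural fact I would establish first is that the angle $\Angle(u,w)$ of every drawn edge equals exactly the angle $\theta_j = \arctan(y_j/x_j)$ of the vector assigned to the path containing that edge, and that these angles are strictly increasing, $0 < \theta_1 < \theta_2 < \cdots < \theta_t < 180^\circ$, because the vectors are sorted by increasing slope $y_i/x_i$. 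This replaces the continuum of distinct edge-angles used in Algorithm \ref{alg:draw-tree} with a small set of $t$ shared angles, which is precisely what lets the grid shrink.

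\textbf{First I would} set up, for each vertex $u$, the interval of angles actually occurring on the edges of the subtree $T(u)$ together with the edge from $u$ to its parent, and define candidate values $\alpha_1(u)$ and $\alpha_2(u)$ as numbers strictly bracketing this set of angles from below and above. Because only finitely many discrete angles $\theta_1,\dots,\theta_t$ appear, I can always pick $\alpha_1(u),\alpha_2(u)$ strictly between consecutive $\theta$-values (or just inside $0^\circ$ and $180^\circ$), so that $0 < \alpha_1(u) < \alpha_2(u) < 180^\circ$ and every relevant edge angle lies strictly inside $(\alpha_1(u),\alpha_2(u))$; this gives condition~1 of Definition \ref{def:disjoint}. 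The nesting condition~2 (for a child $v$ of $u$, $\alpha_1(u)<\alpha_1(v)<\alpha_2(v)<\alpha_2(u)$) should follow from the observation that the set of path-vectors used inside $T(v)$ is a subset of those used inside $T(u)$, hence the bracketing interval for $v$ is contained in that for $u$; I would prove this by induction on the tree structure, using that every leaf in $T(v)$ is also a leaf in $T(u)$ and that the ccw leaf order restricted to $T(v)$ is an interval of the global ccw order.

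\textbf{The main obstacle} will be verifying condition~3, the strict separation of sibling intervals: for two children $v_1,v_2$ of a common vertex $u$ with (say) $v_1$ to the left of $v_2$ in ccw order, I must show $\alpha_2(v_1) < \alpha_1(v_2)$. The subtlety is that sibling subtrees $T(v_1)$ and $T(v_2)$ might in principle share a path-vector, since a single path $b_j$ (the one carrying leaf $l_j$) travels from a leaf up to its attachment vertex and could pass through $u$ — but such a through-path uses the \emph{same} vector on its entire length and belongs to exactly one of the two subtrees below $u$, never both. So the decisive claim to nail down is that the \emph{set of leaves}, and therefore the set of vectors (i.e. angles $\theta_j$), appearing in $T(v_1)$ forms a contiguous block of the ccw leaf order that lies entirely to the left of the block for $T(v_2)$; combined with the increasing-slope ordering of $\V$, this makes the angle-range of $T(v_1)$ lie strictly below the angle-range of $T(v_2)$, and I can then choose $\alpha_2(v_1)$ and $\alpha_1(v_2)$ to fall in the gap between the largest angle used in $T(v_1)$ and the smallest angle used in $T(v_2)$.

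\textbf{Finally,} having established all three slope-disjoint conditions, I would invoke Theorem \ref{thm:disjoint} to conclude that the drawing is monotone, and note that the argument is independent of the chosen permutation $L_\sigma$ and of the edge lengths (each edge length is a single primitive vector), matching the remark after Theorem \ref{thm:disjoint}. I would also separately confirm planarity/non-crossing, since a monotone drawing is required to be a straight-line drawing without crossings: the ccw assignment of increasing-slope vectors to left-to-right leaves, inherited from the correctness of Algorithm \ref{alg:draw-tree}, should guarantee that sibling subtrees occupy angularly disjoint wedges and hence do not overlap.
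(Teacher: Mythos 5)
Your proposal is correct in substance, but it takes a genuinely different route from the paper's own proof. The paper argues directly, without any appeal to slope-disjointness: given two vertices $i,j$ (reduced w.l.o.g.\ to leaves with lowest common ancestor $u$), it splits $P_{ij}$ into the upward part $P_{iu}$ and the downward part $P_{uj}$, and observes that for any $e_a \in P_{iu}$ and $e_b \in P_{uj}$ the reversed edge $\overline{e_a}$ lies on a decomposition path strictly to the left of the path containing $e_b$, whence $\Angle(\overline{e_a}) < \Angle(e_b)$ and therefore $\Angle(e_{\max})-\Angle(e_{\min}) = \Angle(\overline{e_a})+180^{\circ}-\Angle(e_b) < 180^{\circ}$; Property \ref{Prop:monotone-1} then finishes the job. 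Your proof instead verifies Definition \ref{def:disjoint} and invokes Theorem \ref{thm:disjoint}. Notably, both arguments hinge on the very same structural fact, which you correctly isolated as the crux: decomposition paths travel only upward, so every edge inside a subtree $T(v)$ belongs to a path whose defining leaf lies in $T(v)$; the leaves of a subtree form a contiguous block of the ccw leaf order; and slopes increase with that order. In the paper this fact yields $\Angle(\overline{e_a}) < \Angle(e_b)$; in your proof it yields the sibling separation of condition~3. Your route carries more bookkeeping --- you must choose all the $\alpha_1(\cdot),\alpha_2(\cdot)$ values consistently over the whole tree, including the degenerate case of a vertex whose only child sees exactly the same angle set (there the intervals must still be strictly nested, so the slack must shrink with depth; you flag this as routine, and it is) --- but it also buys more: slope-disjointness gives, beyond the existence of monotone paths, the non-crossing property and the freedom to rescale edge lengths arbitrarily (the Remark after Theorem \ref{thm:disjoint}), neither of which is addressed by the paper's direct argument. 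One small correction: the edge angles produced by the algorithm lie in $(0^{\circ},90^{\circ})$, not just $(0^{\circ},180^{\circ})$, since every primitive vector has both coordinates positive; this only makes your bracketing choices easier.
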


\begin{proof}
Consider two vertices $i,j$ in $T$. Let $P_{ij}$ be the (unique) path in
the drawing of $T$ from $i$ to $j$. We need to show $P_{ij}$ is
a monotone path. If either $i$ is an ancestor of $j$, or $j$ is an ancestor
of $i$, this is trivially true (because every edge in $P_{ij}$ has angle
between $0^{\circ}$ and $90^{\circ}$). So we assume this is not the case.

Let $u$ be the lowest common ancestor of $i$ and $j$ in $T$.
Since any subpath of a monotone path is monotone \cite{ACBFP12}, without
loss of generality, we assume both $i$ and $j$ are leaves of $T$, and $i$
is located to the left of $j$ ($i$ appears before $j$ in ccw order)
in the drawing. See Fig \ref{Fig:decomposition} (a).

Let $P_{iu}$ ($P_{uj}$, respectively) be the subpath of $P_{ij}$
from $i$ to $u$ (from $u$ to $j$, respectively). Consider any edge $e_a \in P_{iu}$ and
any edge $e_b \in P_{uj}$. Let $\overline{e_a}$ be the edge $e_a$ but in
opposite direction (i.e., directed away from the root).
Then, $\overline{e_a}$ belongs to a path $b_{i'}$ and
$e_b$ belongs to a path $b_{j'}$ in the path decomposition.
It is easy to see that $b_{i'}$ must appear to the left of $b_{j'}$.
Thus:
$\Angle(\overline{e_a}) < \Angle(e_b) \mbox{ and } \Angle(e_a) 
	= \Angle(\overline{e_a}) + 180^{\circ}$.

	Let $e_{\min}$ and $e_{\max}$ be the two extremal edges in $P_{ij}$ with
	$\Angle(e_{\min}) < \Angle(e_{\max})$. Then, $e_{\max}$ must be an edge $e_a$ in $P_{iu}$
	and $e_{\min}$ must be an edge $e_b$ in $P_{uj}$. By the above equation and inequality,
	we have:
	$\Angle(e_{\max}) - \Angle(e_{\min}) = \Angle(\overline{e_a}) + 180^{\circ}
	- \Angle(e_b) < 180^{\circ}$.
	By Property \ref{Prop:monotone-1}, $P_{ij}$ is a monotone path as to be shown.
\end{proof}

\begin{figure}[!htb]
\begin{center}
\includegraphics[width=0.55\textwidth, angle =0]{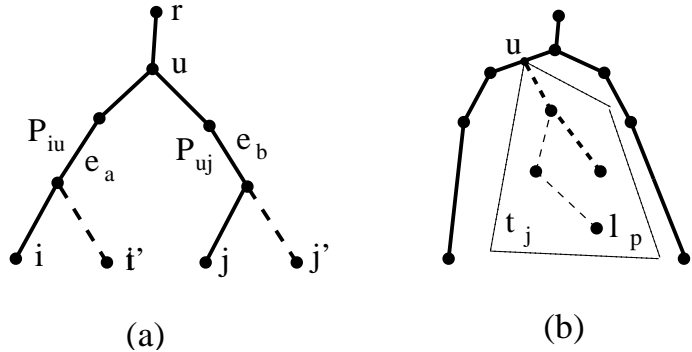}
\centering
\caption{(a) The proof of Theorem \ref{thm:path}; (b) the proof of Lemma \ref{lemma:decompose}.}
\label{Fig:decomposition}
\end{center}
\vspace{-0.2in}
\end{figure}

\subsection{Length Decreasing Path Decomposition of $T$}

In this subsection, we define a special path decomposition of $T$, called the
{\em length decreasing path decomposition} and denoted by LDPD. Later, we will
apply Algorithm \ref{alg:path} with respect to this decomposition.
Note that LDPD is a special case of the well-known
\emph{heavy path decomposition} \cite{ST83}. However,
	our algorithm does not need any operation provided
	by \emph{heavy path decomposition}, only the definition. 

	\begin{definition}\label{def:LDPD}
	A {\em length decreasing path decomposition} $B = \{b_1,b_2,...,b_t\}$ 
	of a tree $T$ is defined as follows:

	\begin{enumerate}
	\item Let $l_1$ be the vertex that is the farthest from the root $r$ of $T$
	(break ties arbitrarily). Define $b_1$ as the path from $l_1$ to $r$.

	\item Suppose that the leaves $l_1, \ldots, l_k$ and the corresponding
	paths $b_1, \ldots, b_k$ have been defined.
	Let $T_k=\cup_{i=1}^k b_i$.
	Let $l_{k+1}$ be the leaf in $L-\{l_1, \ldots,l_k\}$ such that
	the path from $l_{k+1}$ to its attachment $u$ in $T_k$ is the longest among all
	leaves in $L-\{l_1, \ldots,l_k\}$ (break ties arbitrarily).
	Define $b_{k+1}$ as the path from $l_{k+1}$ to $u$.
	\end{enumerate}
	\end{definition}

	Fig \ref{Fig:path-draw} (a) shows a LDPD of a tree.
	Let $|b_i|$ denote the number of edges in $b_i$. By definition, we have
	$|b_i|\geq |b_{i+1}|$ for $1 \leq i < t$. We further
	partition the paths in $B$ as follows.

	\begin{definition}\label{Def:decomposition}
	Let $T$ be a $n$-vertex tree and $B = \{b_1, b_2,..., b_t\}$ be a LDPD of $T$. 
	Let $c>1$ be an integer and $K=\lceil \log_c n\rceil$. The {\em $c$-partition
		of $B$} is a partition of $B$ defined as:
		\begin{eqnarray*}
		D_1 & = & \{ b_i \in B~|~ |b_i| \in [\frac{n-1}{c},(n-1)] \}\\
			D_j & = & \{ b_i \in B~|~ |b_i| \in [\frac{n-1}{c^{j}},
		\frac{n-1}{c^{j-1}})\}, \mbox{ for }1 < j \leq K
		\end{eqnarray*}
		\end{definition}

		Note that $D_j$'s are disjoint and $\cup_{j=1}^K D_j = B$. 
		Let $m_j=|D_j|$ ($1\leq j \leq K$) be the number of paths in $D_j$.
		We have the following:

		\begin{property}\label{prop:partition}
		$c^{K-1} \leq \sum_{j=1}^K m_j c^{K-j} \leq c^K$.
		\end{property}

		\begin{proof}
		For each $j$ ($1 \leq j\leq K$), there are $m_j$ paths in $D_j$. Each path $b_l \in D_j$
		contains $|b_l| \in [\frac{n-1}{c^j},\frac{n-1}{c^{j-1}})$ edges. Thus we have:
		\[ \sum_{j=1}^K m_j \frac{n-1}{c^j} \leq n-1 = \sum_{j=1}^K \sum_{{b_l} \in D_j} |b_l|
		\leq \sum_{j=1}^K m_j \frac{n-1}{c^{j-1}}
		\]
		Hence $\sum_{j=1}^K \frac{m_j}{c^j} \leq 1 \leq
		\sum_{j=1}^K \frac{m_j}{c^{j-1}}$. This implies the property.
		\end{proof}

		For each $j$,  let $T[D_j]$ denote the subgraph of $T$ induced by the edge
		set $\cup_{b_l \in D_j} b_l$. $T[D_j]$ may consist of several
		subtrees in $T$. We call the subtrees in $T[D_j]$ the {\em $j$-level}
		subtrees of $T$. We have: 

		\begin{lemma}\label{lemma:decompose}
		For any $j$ ($1 \leq j \leq K$), let $t_j$ be the subtree among all $j$-level
		subtrees with the largest height $h_j$. Then, $h_j < \frac{n-1}{c^{j-1}}$.
		\end{lemma}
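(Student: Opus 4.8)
The plan is to bound the height $h_j$ of any $j$-level subtree $S$ by the length of a single LDPD path lying in $D_j$, and then invoke the range of $D_j$. Concretely, let $a$ be the apex (topmost vertex) of $S$. First I would argue that $a$ is \emph{not} owned by a $D_j$-path: if the path $b_p$ whose vertex set contains $a$ belonged to $D_j$, then the edge of $b_p$ leaving $a$ upward would also lie in $S$, contradicting that $a$ is the highest vertex of $S$. Hence every $S$-edge incident to $a$ is the top edge of some path of $D_j$ whose attachment is exactly $a$. Let $\pi$ be a longest downward path of $S$ starting at $a$, so $|\pi|=h_j$, and let $b_{l_0}\in D_j$ be the path of $S$ that $\pi$ enters at $a$. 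The goal reduces to proving $h_j=|\pi|\le |b_{l_0}|$, since $b_{l_0}\in D_j$ forces $|b_{l_0}|<\frac{n-1}{c^{j-1}}$ (the upper end of the range for $D_j$ being excluded when $j>1$).

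The engine of the whole argument is a greedy property of the LDPD that I would establish first. For a vertex $v$, write $\mathrm{down}(v)$ for the height of the subtree $T(v)$ rooted at $v$. \textbf{Claim:} if $v$ is a proper (non-attachment) vertex of an LDPD path $b_p$ with leaf endpoint $l_p$, then $\mathrm{dist}(v,l_p)=\mathrm{down}(v)$; equivalently $l_p$ is a deepest leaf below $v$. To prove it, fix the step at which $b_p$ was created and consider any leaf $\ell'\in T(v)$. If $\ell'$ had been used before this step, then at that earlier moment $v$ was not yet in the partial tree, so the path extracted for $\ell'$ would have run up through $v$ and \emph{claimed} $v$, contradicting that $v$ is owned by $b_p$. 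Thus $\ell'$ is still available when $b_p$ is chosen, and since $\ell'$ and $l_p$ share the portion above $v$, both have the same attachment $u$ (the first already-drawn vertex above $v$); the length-maximizing choice of $l_p$ gives $\mathrm{dist}(l_p,u)\ge \mathrm{dist}(\ell',u)$, and subtracting $\mathrm{dist}(v,u)$ yields $\mathrm{dist}(v,l_p)\ge \mathrm{dist}(v,\ell')$. As $\ell'$ is arbitrary, the Claim follows.

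Next I would run a telescoping argument along $\pi$. Write $\pi$ as segments along paths $b_{l_0},b_{l_1},\dots,b_{l_m}$, branching at vertices $v_1,\dots,v_m$ and ending at a leaf $\ell^\ast=l_{l_m}$ of $S$, where each $v_i$ is the attachment of $b_{l_i}$ and lies on $b_{l_{i-1}}$. Applying the Claim at $v_i$ (a proper vertex of $b_{l_{i-1}}$) gives $\mathrm{down}(v_i)=\mathrm{dist}(v_i,l_{l_{i-1}})$, while $b_{l_i}$ is a downward path from $v_i$, so $\mathrm{dist}(v_i,l_{l_i})\le \mathrm{down}(v_i)=\mathrm{dist}(v_i,l_{l_{i-1}})$: a path hanging off the current path is never longer than the remaining tail of the current path below the branch point. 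Splitting $\mathrm{dist}(v_{i-1},l_{l_{i-1}})=\mathrm{dist}(v_{i-1},v_i)+\mathrm{dist}(v_i,l_{l_{i-1}})$ and chaining these inequalities telescopes to $\mathrm{dist}(a,l_{l_0})\ge \mathrm{dist}(a,v_1)+\sum_i \mathrm{dist}(v_i,v_{i+1})+\mathrm{dist}(v_m,\ell^\ast)=|\pi|$. Hence $h_j=|\pi|\le |b_{l_0}|<\frac{n-1}{c^{j-1}}$.

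The main obstacle is the greedy Claim: its proof hinges on the subtle ``claiming'' argument showing that no deeper leaf below $v$ can have been processed before $v$'s own path, which is exactly where the length-decreasing selection rule of the LDPD is used. Once the Claim is in hand, the reduction to a single path length and the telescoping are routine. I would also note the boundary case $j=1$ separately: there the range is closed at $n-1$, so the argument gives only $h_1\le |b_{l_0}|\le n-1$, with strict inequality unless $T$ is a single path; this degenerate case is harmless for the grid-size bound that uses the lemma.
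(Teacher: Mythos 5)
Your proof is correct, and it takes a noticeably different route from the paper's. The paper disposes of the lemma with a four-line proof by contradiction: if $h_j \geq \frac{n-1}{c^{j-1}}$, then the deepest leaf $l_p$ of $t_j$ lies at distance $h_j$ from the attachment $u$ of $t_j$, so by the greedy rule of Definition \ref{def:LDPD} that leaf ``should have been chosen'' at an earlier step, producing a path of length at least $\frac{n-1}{c^{j-1}}$ inside $t_j$ and contradicting the fact that every path of $t_j$ belongs to $D_j$; the claiming mechanics that make this rigorous (no vertex of $t_j$ other than $u$ can enter the partial tree before the first $D_j$-path of $t_j$ is extracted) are left implicit. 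You instead isolate exactly those mechanics as a standalone structural fact --- for every non-attachment vertex $v$ of an LDPD path $b_p$, the leaf $l_p$ is a deepest leaf of $T(v)$ --- and then derive the bound directly as $h_j \leq |b_{l_0}| < \frac{n-1}{c^{j-1}}$, where $b_{l_0}$ is the $D_j$-path attached at the apex. What your route buys is rigor and reusability (your Claim says the LDPD is a depth-greedy heavy-path decomposition, which is of independent interest); what the paper's buys is brevity. Three small remarks. First, your telescoping is more than you need: the tail of $\pi$ below the first branch vertex $v_1$ is a downward path inside $T(v_1)$, so a single application of the Claim gives its length to be at most $\mathrm{down}(v_1) = \mathrm{dist}(v_1, l_{l_0})$, whence $|\pi| \leq \mathrm{dist}(a,v_1) + \mathrm{dist}(v_1,l_{l_0}) = |b_{l_0}|$ immediately; the claims at $v_2,\dots,v_m$ are unnecessary. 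Second, in the proof of the Claim, the assertion that $\ell'$ and $l_p$ have the same attachment at step $p$ needs the slightly stronger fact that no vertex of $T(v)\setminus\{v\}$, leaf or internal, lies in the partial tree $T_{p-1}$; this does follow from what you prove (an internal vertex $w \in T(v)$ first drawn at step $m<p$ would be a non-attachment vertex of $b_m$, whose leaf $l_m \in T(w) \subseteq T(v)$ would then have been used before step $p$, which you have excluded), but the sentence should be there. Third, your flag on $j=1$ is a genuine catch: when $T$ is a single path the strict inequality $h_1 < n-1$ fails, so the lemma as stated (and the paper's own proof of it) is off in this degenerate case; as you note, this is harmless because Theorem \ref{thm:main} only uses $h_1 \leq n-1 < n$ at level $1$.
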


\begin{proof}
For a contradiction, suppose $h_j \geq \frac{n-1}{c^{j-1}}$.
Let $l_p$ be the leaf in $t_j$ with the largest distance
to the attachment $u$ of $t_j$. So the length of the path from $l_p$
to $u$ is $h_j \geq \frac{n-1}{c^{j-1}}$. See Fig \ref{Fig:decomposition} (b).
By the definition of the LDPD, $l_p$ should have been chosen as $l_q$
for some index $q < p$ such  that $|b_q| \geq \frac{n-1}{c^{j-1}}$. This
contradiction shows the assumption $h_j \geq \frac{n-1}{c^{j-1}}$ is false. 
\end{proof}

\subsection{Vector Assignment}\label{subsec:vector}

We will use Algorithm \ref{alg:path} with respect to a LDPD 
$B=\{ b_1,\ldots, b_t\}$ of $T$. First, we need the following definition:

\begin{definition}\label{def:valid}
Two positive integers $(f,d)$ are called a {\em valid-pair}
if the following hold:
\begin{enumerate}
\item $f \geq d$;
\item For any positive integer $\Delta$ and any two consecutive vectors
$(x_1,y_1)$ and $(x_2,y_2)$ in $\overline{P_{\Delta}}$ with
$y_1/x_1 < y_2/x_2$ (either one can be the boundary vector
(0,1) or (1,0)), there exist at least $f$ vectors $(x,y)$
in $\overline{P_{d\Delta}}-\overline{P_{\Delta}}$
such that $y_1/x_1 < y/x < y_2/x_2$.
\end{enumerate}
\end{definition} 

Our algorithm works for any valid-pair $(f,d)$. Later we will show
$(f,d)=(3,3)$ is a valid-pair. The main ideas of our algorithm are as follows.
Take a valid-pair $(f,d)$.
Set $c=f+1$ and let $K=\lceil \log_c n\rceil$.
Let $T$ be a tree and $B=\{b_1,\ldots, b_t\}$ be a LDPD of $T$.
Let $D = \{D_1, D_2,..., D_K\}$ be the $c$-partition of $B$. The vectors in $\overline{P_{d^{j}}} - \overline{P_{d^{j-1}}}$
are called {\em level}-$j$ vectors (for $1\leq j \leq K$). In other words,
the level-$j$ vectors are the vectors with size in the range $(d^{j-1},d^j]$.
We assign the level-1 vectors to the paths in $D_1$.
(Because the paths in $D_1$ are very long, we assign very short level-1 vectors
to them). As the index $j$ becomes larger, the paths in $D_j$ are shorter.
We can afford to assign longer level-$j$ vectors to the paths in $D_j$
without increasing the size of the drawing too much.

Next, we describe our algorithm in details. It first constructs a set
$\V$ of primitive vectors as follows.

\begin{itemize}
\item There is only one primitive vector $(1,1)$ in $\overline{P_1}$.
By the definition of the valid-pair, there exist at least
$f$ vectors in $\overline{P_{d}}- \overline{P_{1}}$ between
$(1,0)$ and $(1,1)$. Let $S_1$ be a set of $f$ vectors among them.
Similarly, there exist at least
$f$ vectors in $\overline{P_{d}}- \overline{P_{1}}$ between
$(1,1)$ and $(0,1)$. Let $S_2$ be a set of $f$ vectors among them.
Define $R_1=S_1 \cup S_2 \cup \{ (1,1)\}$. Thus $|R_1|= 2f+1$.

\item Between any two consecutive vectors in $R_1\cup \{(0,1), (1,0)\}$,
there are at least $f$ vectors in $\overline{P_{{d}^2}} - \overline{P_{{d}^1}}$.
Pick exactly $f$ vectors among them. Let $R_2$ be the union of all vectors
picked for all consecutive pairs. Thus $|R_2| = (2f+2)f$.

\item Suppose we have defined $R_1,\ldots, R_j$. Between any two consecutive
vectors in $[\cup_{i=1}^j R_i] \cup \{ (0,1), (1,0)\}$, there are
at lease $f$ vectors in $\overline{P_{{d}^{j+1}}} - \overline{P_{{d}^j}}$.
Pick exactly $f$ vectors among them. Let $R_{j+1}$ be the union of all vectors
picked for all consecutive pairs. Thus $|R_{j+1}| = (1+\sum_{i=1}^j|R_i|)f$.

\item Define $\V = \cup_{j=1}^K R_j$ (in increasing order of slopes).
\end{itemize}

\begin{lemma}\label{lemma:size}
\begin{enumerate}
\item For any $p$ ($1\leq p \leq K$), $\sum_{j=1}^p |R_j| = 2(f+1)^p -1$.
(This implies $|\V| = 2(f+1)^K -1$).

\item Let $j$ ($1 \leq j \leq K$) be an integer. Consider any vector $V_{\alpha}$
in $\V$. Let $V_{\beta}$ be the first vector in $\V$ after $V_{\alpha}$ that is in
$R_i$ with $i \leq j$. Then there are at most $(f+1)^{K-j}-1$ vectors
in $\V$ between $V_{\alpha}$ and $V_{\beta}$.
\end{enumerate}
\end{lemma}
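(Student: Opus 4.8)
The plan is to handle the two parts separately: Part 1 is a direct induction from the recurrence defining the sizes $|R_j|$, while Part 2 rests on a second recurrence that counts the vectors inserted into a single angular interval across all subsequent levels.

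For Part 1, I would induct on $p$. The base case $p=1$ is immediate, since $|R_1| = 2f+1 = 2(f+1) - 1$. For the inductive step, assume $\sum_{j=1}^p |R_j| = 2(f+1)^p - 1$. The construction gives $|R_{p+1}| = (1 + \sum_{j=1}^p |R_j|)\,f$, so by the inductive hypothesis $|R_{p+1}| = (1 + 2(f+1)^p - 1)\,f = 2f(f+1)^p$. Adding this to the hypothesis yields $\sum_{j=1}^{p+1}|R_j| = 2(f+1)^p - 1 + 2f(f+1)^p = 2(f+1)^p(1+f) - 1 = 2(f+1)^{p+1} - 1$, completing the induction. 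Setting $p=K$ gives $|\V| = 2(f+1)^K - 1$.

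For Part 2, the key observation is that for any $j$ with $1 \leq j \leq K$, the vectors of $\cup_{i=1}^j R_i$ together with the boundary vectors $(1,0)$ and $(0,1)$ partition the slope range into intervals, and within each such interval the construction inserts $f$ new vectors at level $j+1$, then recursively $f$ more in each of the resulting $f+1$ sub-intervals at level $j+2$, and so on up to level $K$. Let $g(m)$ denote the number of vectors that end up strictly inside one such interval when $m = K - j$ further levels of insertion remain. Inserting the $f$ level-$(j+1)$ vectors and then recursing independently on the $f+1$ sub-intervals gives the recurrence $g(m) = f + (f+1)g(m-1)$ with $g(0) = 0$, which solves to $g(m) = (f+1)^m - 1$. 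To derive the stated bound, let $V_\gamma$ be the last vector of $\cup_{i=1}^j R_i \cup \{(1,0),(0,1)\}$ whose slope is at most that of $V_\alpha$. Then $V_\gamma$ and $V_\beta$ are consecutive vectors of this set, so the number of vectors of $\V$ strictly between them is exactly $g(K-j) = (f+1)^{K-j} - 1$. Since $V_\alpha$ has slope at least that of $V_\gamma$, the vectors strictly between $V_\alpha$ and $V_\beta$ form a subset of those strictly between $V_\gamma$ and $V_\beta$, giving the bound of at most $(f+1)^{K-j} - 1$.

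The main obstacle is justifying the recurrence $g(m) = f + (f+1)g(m-1)$ cleanly, specifically verifying that each subsequent level of insertion acts independently on each sub-interval, so that the count of interior vectors genuinely acquires a factor of $(f+1)$ at each step. A consistency check against Part 1 confirms the formula: summing $g(K-1) = (f+1)^{K-1} - 1$ over the $2f+2$ intervals determined by $R_1$ and the two boundary vectors, and adding $|R_1| = 2f+1$, recovers $|\V| = 2(f+1)^K - 1$.
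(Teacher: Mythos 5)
Your proposal is correct and takes essentially the same approach as the paper: Part 1 is the identical induction, and Part 2 performs the same count of vectors inserted into a single slope interval across levels $j+1,\ldots,K$ --- the paper sums $f + (f+1)f + (f+1)^2 f + \cdots$ level by level and invokes ``induction similar to Statement 1,'' which is precisely your recurrence $g(m) = f + (f+1)\,g(m-1)$ with solution $(f+1)^m - 1$. Your device of introducing $V_{\gamma}$, the predecessor of $V_{\alpha}$ in $\bigl[\cup_{i\leq j} R_i\bigr] \cup \{(1,0),(0,1)\}$, and then using the subset relation is a slightly more rigorous rendering of the paper's informal ``the worst case is when $V_{\alpha}$ itself is a level-$j$ vector,'' but the underlying argument is the same.
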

\begin{proof}
Statement 1: We prove the equality by induction on $p$.

When $p=1$, $|R_1| = 2f+1 = 2(f+1) -1$ is trivially true.

Assume $\sum_{j=1}^p |R_j| = 2(f+1)^p -1$.

Then: $\sum_{j=1}^{p+1} |R_j| = (\sum_{j=1}^p |R_j|) + |R_{p+1}| 
= (2(f+1)^p-1) + (2(f+1)^p -1 +1)f =2(f+1)^{p+1} -1$ as to be shown.

Statement 2: Let $S$ be the set of the vectors in $\V$ that are between
$V_{\alpha}$ and $V_{\beta}$. The worst case (that $S$ has the largest size)
is when $V_{\alpha}$ itself is a level-$j$ vector.
In this case, $S$ contains $f$ level-$(j+1)$ vectors,
$(f+1)f$ level-$(j+2)$ vectors, $\ldots$ and so on. By using induction similar
to the proof of Statement 1, we can show $|S| = (f+1)^{K-j}-1$.
\end{proof}

Let $b_1, \ldots, b_t$ be the paths in a LDPD of $T$ ordered from left to right.
We call $b_l$ a {\em level}-$j$ path if $b_l \in D_j$. We will assign
the vectors in $\V$ to the paths $b_l$ ($1\leq l \leq t$) such that
the following properties hold:
\begin{itemize}
\item The vectors in $\V$ (in the order of increasing slopes)
are assigned to $b_l$'s in ccw order.

\item For each level-$j$ path $b_l$, $b_l$ is assigned a vector in $R_i$
with $i \leq j$.
\end{itemize}

\subsection{Algorithm}

Now we present our optimal monotone drawing Algorithm for trees.

\begin{algorithm}[!htb]
\caption{Optimal Draw}\label{alg:final}
\begin{description}
\item[Input:] A tree $T=(V,E)$, and a valid-pair ($f, d$).

\item[1.] Find a LDPD $B=\{ b_1,\ldots b_t\}$ of $T$ (ordered from left to right).

\item[2.] Set $c=f+1$ and let $K=\lceil \log_c n\rceil$.
Construct the $c$-partition $D=\{D_1, \ldots, D_K\}$ of $B$.

\item[3.] Let $\V$ be the set of primary vectors (in increasing
order of slops) defined in subsection \ref{subsec:vector}.

\item[4.] For $l=1$ to $t$ do:

If $b_l$ is a level-$j$ path, assign the next available (skip some vectors
in $\V$ if necessary) vector in $\V$ that is in $R_i$ with $i \leq j$.

\item[5.] Draw the vertices of $T$ as in step 3 of Algorithm \ref{alg:draw-tree}.
\end{description}
\end{algorithm}

It is not clear whether there are enough vectors in $\V$ such that
the vector assignment procedure described in Algorithm \ref{alg:final}
can succeed. The following lemma shows this is indeed the case.

\begin{lemma}\label{lemma:assign}
There are enough vectors in $\V$ such that the vector assignment procedure 
in Algorithm \ref{alg:final} can be done.
\end{lemma}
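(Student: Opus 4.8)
The plan is to view the assignment loop in step 4 of Algorithm \ref{alg:final} as a single monotone sweep of a pointer through $\V$, whose vectors are listed in increasing slope order, matching the ccw order in which the paths $b_1,\ldots,b_t$ are processed. When the sweep reaches a level-$j$ path $b_l$, it advances the pointer past every vector lying in some $R_i$ with $i>j$ (these are ``too high'' to be legally assigned to $b_l$) until it reaches the first vector in some $R_i$ with $i\le j$, which is assigned to $b_l$. Thus processing $b_l$ consumes one assigned vector together with the vectors skipped over before it, and the procedure succeeds precisely when the pointer never runs off the end of $\V$ before all $t$ paths are served. So it suffices to bound the total number of consumed vectors and compare it with $|\V|$.

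For the per-path bound I will invoke Lemma \ref{lemma:size}(2). Taking $V_\alpha$ to be the vector assigned to the previous path $b_{l-1}$ (and the boundary position before the first vector when $l=1$), the first subsequent vector $V_\beta$ lying in some $R_i$ with $i\le j$ is separated from $V_\alpha$ by at most $(f+1)^{K-j}-1=c^{K-j}-1$ vectors, all of which are exactly the skipped high-level vectors. Hence processing a level-$j$ path advances the pointer by at most $(c^{K-j}-1)+1=c^{K-j}$. Crucially, this bound is insensitive to the order in which the levels appear, since Lemma \ref{lemma:size}(2) applies to an arbitrary starting vector $V_\alpha$; this matters because the levels $j(l)$ need not be monotone in $l$ when the $b_l$ are ordered from left to right.

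Summing the per-path bound and grouping the $t$ paths by level (there are $m_j=|D_j|$ paths of level $j$), the total number of consumed vectors is at most $\sum_{l=1}^t c^{K-j(l)}=\sum_{j=1}^K m_j\,c^{K-j}$. By the upper bound in Property \ref{prop:partition}, this sum is at most $c^K$. On the other hand, Lemma \ref{lemma:size}(1) gives $|\V|=2(f+1)^K-1=2c^K-1$, which is at least $c^K$ since $c=f+1\ge 2$. Therefore the pointer travels a total distance of at most $c^K\le |\V|$, so it never overruns $\V$ and every path receives a legal vector.

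The step I expect to be the main obstacle is making the per-path bound rigorous at the extremes of the sweep: I must argue that at each step the required vector $V_\beta$ of level $\le j$ actually exists within the claimed window (rather than merely assuming it), and I must handle the first path, where there is no previously assigned $V_\alpha$. Both are controlled by the same cumulative estimate: after $l$ paths the pointer position is at most $\sum_{l'\le l} c^{K-j(l')}\le c^K<|\V|$, which guarantees we are always strictly inside $\V$ when searching for the next low-level vector, so Lemma \ref{lemma:size}(2) applies legitimately; the initial boundary is treated as a virtual level-$0$ position, which can only shorten the first window.
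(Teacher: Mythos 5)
Your proposal is correct and follows essentially the same route as the paper's own proof: bound the vectors consumed by a level-$j$ path by $(f+1)^{K-j}$ via Lemma~\ref{lemma:size}(2), sum over levels to get $\sum_{j=1}^K m_j c^{K-j} \leq c^K$ by Property~\ref{prop:partition}, and compare with $|\V| = 2c^K-1$ from Lemma~\ref{lemma:size}(1). Your extra care about the first path and the existence of $V_\beta$ within each window only makes explicit what the paper leaves implicit.
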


\begin{proof}
Consider a level-$j$ path $b_l$. In order to assign a vector
$V_{\beta} \in \V$ to $b_l$, we may have to skip at most $(f+1)^{K-j} -1$
vectors in $\V$ by Lemma \ref{lemma:size}.
Also counting the vector $V_{\beta}$ assigned to $b_l$, we {\em consume} at most
$(f+1)^{K-j}$ vectors in $\V$. Thus the total number
of vectors needed by the vector assignment procedure is bounded by:

\[ \sum_{j=1}^K m_j\cdot (f+1)^{K-j} =\sum_{j=1}^K m_j\cdot c^{K-j}  
 \leq c^K \mbox{ (by Property \ref{prop:partition})} 
\leq 2c^K -1 = |\V| \mbox{ (by Lemma \ref{lemma:size}.)}
\]
Thus, there are enough vectors in $\V$ for the vector assignment procedure in Algorithm \ref{alg:final}.
\end{proof}

Now we can prove our main theorem:

\begin{theorem}\label{thm:main}
For any valid-pair $(f,d)$, Algorithm \ref{alg:final} constructs a
monotone drawing of $T$ with size $I \times I$, where
$I\leq \frac{(f+1)\cdot d}{(f+1)-d} n$, in $O(n)$ time.
\end{theorem}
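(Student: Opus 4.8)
The plan is to verify three things in turn: that the output is monotone, that it fits in an $I \times I$ grid with $I \le \frac{(f+1)d}{(f+1)-d}n$, and that the whole procedure runs in $O(n)$ time. Monotonicity is essentially free: Algorithm \ref{alg:final} is exactly Algorithm \ref{alg:path} applied to the LDPD $B$, and its Step 4 assigns the vectors of $\V$ to the paths $b_1,\dots,b_t$ in increasing order of slope as $l$ runs from $1$ to $t$, i.e.\ in ccw order of the paths. This is precisely the hypothesis of Theorem \ref{thm:path}, so the drawing is monotone. That Step 4 can actually be carried out (that $\V$ never runs out of vectors of the required level) is exactly Lemma \ref{lemma:assign}. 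Hence the real content of the theorem is the coordinate bound, and I would spend most of the argument there.

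For the size bound I would start from the observation that the root $r$ sits at the origin and every primitive vector in $\overline{P_{d^i}}$ has both coordinates in $[1,d^i]$, so every vertex has nonnegative coordinates and it suffices to bound $x(w)$ and $y(w)$ from above; by the $x\leftrightarrow y$ symmetry of $\overline{P_{d^i}}$ the two bounds are identical. Now $x(w)$ equals the sum of the $x$-components of the vectors assigned to the edges on the path $\pi$ from $r$ to $w$, so I would group these edges by level and write $x(w)\le \sum_{j=1}^K N_j(\pi)\,d^j$, where $N_j(\pi)$ is the number of level-$j$ edges on $\pi$; here I use that a level-$j$ path receives a vector from some $R_i$ with $i\le j$, whose components are at most $d^i\le d^j$.

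The key step, and the one I expect to be the main obstacle, is to show $N_j(\pi) \le h_j < \frac{n-1}{c^{j-1}}$ for every root-to-leaf path $\pi$, thereby upgrading Lemma \ref{lemma:decompose} from a statement about subtree heights to a bound on the edge count along a single path. The crucial structural fact is that the levels of the path-pieces along $\pi$ are monotone: walking from the leaf up to $r$, each time $\pi$ reaches the attachment of its current path $b_{l}$ it continues inside a path $b_{l'}$ defined earlier ($l' < l$), hence at least as long, hence of level index no larger. So levels are non-increasing from leaf to root, which forces the level-$j$ edges of $\pi$ to occur in one contiguous block, and therefore to lie inside a single $j$-level subtree, entered at its root and descending. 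Their number is thus at most the height of that subtree, which is $< \frac{n-1}{c^{j-1}}$ by Lemma \ref{lemma:decompose}. Substituting and using $c=f+1>d$ (guaranteed by $f\ge d$ in the valid-pair definition), I obtain $x(w) < (n-1)\,d\sum_{j=1}^K (d/c)^{j-1} < (n-1)\,d\cdot\frac{c}{c-d} = (n-1)\frac{(f+1)d}{(f+1)-d} \le \frac{(f+1)d}{(f+1)-d}\,n$, which is the claimed $I$.

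Finally, for the $O(n)$ running time I would account for each step. The LDPD is a heavy-path decomposition keyed on vertex height and is computable in $O(n)$ by computing all heights and following maximum-height children; bucketing the $t\le n$ paths into the $c$-partition $D_1,\dots,D_K$ is $O(n)$. The vector set $\V$ has size $2(f+1)^K-1=O(n)$ by Lemma \ref{lemma:size}(1) (since $c^K=\Theta(n)$), and its members can be produced within this bound by locating, within each gap between consecutive already-chosen slopes, the required $f$ vectors of the next level through the Stern--Brocot (mediant) structure rather than enumerating the prohibitively large set $\overline{P_{d^K}}$; this is the only point where a naive implementation would exceed linear time, so I would flag it explicitly. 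The single left-to-right pass of Step 4, assigning vectors to paths, costs $O(t+|\V|)=O(n)$, and the coordinate computation in Step 5 is $O(n)$. Summing the steps gives the $O(n)$ bound.
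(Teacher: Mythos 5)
Your proof is correct and follows essentially the same route as the paper's: monotonicity via Theorem \ref{thm:path} together with Lemma \ref{lemma:assign}, and the size bound by summing the per-level contributions $d^j\cdot (n-1)/c^{j-1}$ (Lemma \ref{lemma:decompose}) into a geometric series convergent because $c=f+1>d$. Your explicit argument that levels are non-increasing along any leaf-to-root path, so the level-$j$ edges of such a path form one contiguous block inside a single $j$-level subtree, is a detail the paper's proof leaves implicit; it is exactly the right justification for that step.
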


\begin{proof}
Because the vector assignments in Algorithm \ref{alg:final} satisfy
the condition required by Algorithm \ref{alg:path}, it indeed produces
a monotone drawing of $T$ by Theorem \ref{thm:path}. It's straightforward
to show that the algorithm takes $O(n)$ time by using basic algorithmic
techniques as in \cite{ACBFP12,HH15a}. Next, we analyze the size of
the drawing.

The subgraph $T[D_1]$ is a subtree of $T$ with height at most $n-1<n$.
The paths in $D_1$ are assigned the vectors of length at most $d$. 
So, Algorithm \ref{alg:final} draws the level-1 subtree $T[D_1]$ on a
grid of size at most $d\cdot n \times d \cdot n$.

In general, the height of any subtree in the subgraph $T[D_j]$ is at most
$(n-1)/c^{j-1} < n/c^{j-1}$ (where $c=f+1)$ by Lemma \ref{lemma:decompose}.
The paths in $D_j$ are assigned the vectors of length at most $d^j$. 
So Algorithm \ref{alg:final} can draw the level-$j$ subtrees in $T[D_j]$,
increasing the size of the drawing by at most $d \cdot (d/c)^{j-1} \cdot n$ in
both $x$- and $y$-direction.

So, Algorithm \ref{alg:final} draws $T$ on an $I \times I$ grid, where
$I \leq n \cdot d \cdot \sum_{j=1}^K (d/c)^{j-1} < \frac{c\cdot d}{c-d} n
=\frac{(f+1)\cdot d}{(f+1)-d} n$.
\end{proof}

\subsection{The Existence of Valid-pairs}

Let $F_0=1, F_1=1, F_2=F_0+F_1=2, \ldots, F_{i+2}=F_{i+1}+F_i \ldots$ be
the Fibonacci numbers. In this subsection, we show:
\begin{lemma}\label{lemma:valid}
For any integer $q\geq 2$, $(2^q-1,F_{q+1})$ is a valid-pair.
\end{lemma}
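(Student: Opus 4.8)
The plan is to verify the two conditions of Definition~\ref{def:valid} for the pair $(f,d)=(2^q-1,F_{q+1})$. Condition~1, namely $2^q-1\ge F_{q+1}$, is routine: a straightforward induction on $q$ (with base cases $q=2,3$, using $F_{k+2}=F_{k+1}+F_k$) shows the Fibonacci numbers grow strictly slower than $2^q$, so I would dispose of it quickly. The heart of the matter is Condition~2, where I would exploit the Stern--Brocot / mediant structure underlying the primitive vectors. The first key observation I would establish is that two vectors $v_1=(x_1,y_1)$ and $v_2=(x_2,y_2)$ that are consecutive in $\overline{P_\Delta}$ (with the boundary vectors $(1,0),(0,1)$ allowed) are unimodular, i.e.\ $x_1y_2-x_2y_1=1$; this is the Farey-neighbour property and follows from the standard theory of the Farey sequence and Stern--Brocot tree already invoked in the paper. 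Consequently $v_1,v_2$ form a basis of $\mathbb{Z}^2$, and a vector $w=\alpha v_1+\beta v_2$ is primitive precisely when $\gcd(\alpha,\beta)=1$.

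Next I would describe the vectors strictly between $v_1$ and $v_2$. Under the unimodular change of basis sending $v_1\mapsto(1,0)$ and $v_2\mapsto(0,1)$, the primitive vectors whose slope lies strictly between $y_1/x_1$ and $y_2/x_2$ are exactly the $w=\alpha v_1+\beta v_2$ with $\alpha,\beta\ge1$ and $\gcd(\alpha,\beta)=1$; these are organized as the Stern--Brocot subtree whose root is the mediant $v_1+v_2$, each node's two children replacing a bounding neighbour by the new mediant. Two size facts then make the counting work. First, every such $w$ automatically lies outside $\overline{P_\Delta}$: since $\alpha,\beta\ge1$ and all coordinates are nonnegative we have $w_x\ge x_1+x_2$ and $w_y\ge y_1+y_2$, while the mediant $v_1+v_2$ must already leave $[1,\Delta]^2$ (otherwise it would be a vector of $\overline{P_\Delta}$ strictly between $v_1,v_2$, contradicting consecutiveness). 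Second, because all coordinates of $v_1,v_2$ are at most $\Delta$, we get $\max(w_x,w_y)\le(\alpha+\beta)\,\Delta$. Hence any $w$ with coefficient sum $\alpha+\beta\le F_{q+1}=d$ lands in $\overline{P_{d\Delta}}-\overline{P_\Delta}$, and it therefore suffices to exhibit $2^q-1$ distinct coprime pairs $(\alpha,\beta)$ with $\alpha+\beta\le F_{q+1}$.

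To produce them I would take the complete binary Stern--Brocot subtree of depth $q-1$, which has exactly $\sum_{k=0}^{q-1}2^k=2^q-1$ nodes, all distinct and all with $\gcd(\alpha,\beta)=1$. The remaining point is a bound on the coefficient sum by depth. Tracking a node through its bounding pair, a left (resp.\ right) step sends the pair of coefficient-sums $(s,t)\mapsto(s,\,s+t)$ (resp.\ $(s+t,\,t)$) starting from $(1,1)$, so the node sum at depth $k$ is $s+t$ for the pair reached after $k$ steps. I would prove by induction on $k$ the joint invariant that after $k$ steps $s+t\le F_{k+2}$ and $\max(s,t)\le F_{k+1}$; the inductive step is immediate from $F_{k+3}=F_{k+2}+F_{k+1}$, with the equality (extremal) case being the alternating path, i.e.\ the familiar golden-ratio worst case of the slow Euclidean recursion. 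Thus every node at depth $k$ has coefficient sum at most $F_{k+2}$, so all $2^q-1$ nodes of the depth-$(q-1)$ subtree have sum at most $F_{(q-1)+2}=F_{q+1}$. These yield $2^q-1=f$ primitive vectors of $\overline{P_{d\Delta}}-\overline{P_\Delta}$ with slope strictly between $v_1$ and $v_2$, establishing Condition~2.

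I expect the main obstacle to be the Fibonacci coefficient-sum bound: one must set up the right joint induction (carrying $\max(s,t)$ alongside $s+t$) so that the recursion closes via $F_{k+3}=F_{k+2}+F_{k+1}$, and argue that the alternating path is genuinely extremal. The unimodularity of consecutive vectors and the Stern--Brocot identification of the in-between vectors are standard but must be stated carefully, especially the boundary cases involving $(1,0)$ and $(0,1)$.
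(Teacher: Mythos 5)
Your proposal is correct and follows essentially the same route as the paper: both exploit the Farey-neighbour unimodularity of consecutive vectors in $\overline{P_\Delta}$, grow the Stern--Brocot mediant tree between them, take the complete subtree of depth $q-1$ (the paper's levels $1$ through $q$) to obtain $2^q-1$ distinct primitive vectors, and bound their sizes by $F_{q+1}\Delta$ via the Fibonacci recursion. Your reformulation in coefficient coordinates $(\alpha,\beta)$ with the joint invariant $s+t\le F_{k+2}$, $\max(s,t)\le F_{k+1}$ is just a more explicit rendering of the paper's inductive size bound on tree levels (its Facts 6--8), so the two arguments coincide in substance.
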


\begin{proof}
Fix a positive integer $\Delta$. Let $y_1/x_1$ and $y_2/x_2$ be any two
consecutive vectors in $\overline{P_{\Delta}}$. We have $y_2 x_1 - y_1 x_2 = 1$
(\cite{HW89}, Theorem 28) and $x_1+x_2 > \Delta$ (\cite{HW89}, Theorem 30).

Define an operator $\odot$ of two fractions as follows:

\[\frac{y_1}{x_1} \odot \frac{y_2}{x_2} = \frac{y_1+y_2}{x_1+x_2}
\]

Let $y_3/x_3=\frac{y_1}{x_1} \odot \frac{y_2}{x_2}$.
It is easy to show that $y_3/x_3$ is a fraction strictly between $y_1/x_1$ and
$y_2/x_2$. Similarly, let $y_4/x_4 = y_1/x_1 \odot y_3/x_3$
and $y_5/x_5 =y_3/x_3 \odot y_2/x_2$, we have three fractions
$y_4/x_4 < y_3/x_3 <y_5/x_5$ strictly between $y_1/x_1$ and $y_2/x_2$.

Repeating this process, we can generate all fractions between $y_1/x_1$ and
$y_2/x_2$ in the form of a binary tree, called the Stern-Brocot tree for
$y_1/x_1$ and $y_2/x_2$, denoted by $\T(y_1/x_1, y_2/x_2)$, as follows.
(The original Stern-Brocot tree defined in \cite{Br1860,St1858} is for
the fractions $y_1/x_1=0/1$ and $y_2/x_2=1/0$).

$\T(y_1/x_1, y_2/x_2)$ has two nodes $y_1/x_1$ and $y_2/x_2$ in level 0.
Level 1 contains a single node $r$ labeled by the fraction
$y_3/x_3=y_1/x_1 \odot y_2/x_2$, which is the right child of $y_1/x_1$,
and is the left child of $y_2/x_2$. An infinite ordered binary tree rooted at
$y_3/x_3$ is constructed as follows. Consider a node $y/x$ of the tree.
The left child of $y/x$ is $y/x \odot y'/x'$ where $y'/x'$ is the ancestor
of $y/x$ that is closest to $y/x$ (in terms of graph-theoretical distance in
$\T(y_1/x_1, y_2/x_2)$) and that has $y/x$ in its right subtree.
The right child of $y/x$ is $y/x \odot y''/x''$ where $y''/x''$ is the ancestor
of $y/x$ that is closest to $y/x$ and that has $y/x$ in its left subtree.
(Fig \ref{fig:SB-tree} shows a portion of the Stern-Brocot
tree $\T(4/5,5/6)$. The leftmost column indicates the level numbers).

\begin{figure}[!htb]
\begin{center}
\begin{picture}(400,130)(25,80)
\put(160,200){\makebox(0,0){$\frac{4}{5}$}}
\put(240,200){\makebox(0,0){$\frac{5}{6}$}}
\put(165,193){\line(3,-1){30}}
\put(235,193){\line(-3,-1){30}}
\put(200,175){\makebox(0,0){$\frac{9}{11}$}}
\put(120,150){\makebox(0,0){$\frac{13}{16}$}}
\put(280,150){\makebox(0,0){$\frac{14}{17}$}}
\put(190,170){\line(-4,-1){60}}
\put(210,170){\line(4,-1){60}}
\put(80,125){\makebox(0,0){$\frac{17}{21}$}}
\put(160,125){\makebox(0,0){$\frac{22}{27}$}}
\put(240,125){\makebox(0,0){$\frac{23}{28}$}}
\put(320,125){\makebox(0,0){$\frac{19}{23}$}}
\put(112,145){\line(-5,-2){27}}
\put(128,145){\line(5,-2){27}}
\put(272,145){\line(-5,-2){27}}
\put(288,145){\line(5,-2){27}}
\put(55,95){\makebox(0,0){$\frac{21}{26}$}}
\put(105,95){\makebox(0,0){$\frac{30}{37}$}}
\put(135,95){\makebox(0,0){$\frac{35}{43}$}}
\put(185,95){\makebox(0,0){$\frac{31}{38}$}}
\put(215,95){\makebox(0,0){$\frac{32}{39}$}}
\put(265,95){\makebox(0,0){$\frac{37}{45}$}}
\put(295,95){\makebox(0,0){$\frac{33}{40}$}}
\put(345,95){\makebox(0,0){$\frac{24}{29}$}}
\put(72,120){\line(-1,-2){10}}
\put(88,120){\line(1,-2){10}}
\put(152,120){\line(-1,-2){10}}
\put(168,120){\line(1,-2){10}}
\put(232,120){\line(-1,-2){10}}
\put(248,120){\line(1,-2){10}}
\put(312,120){\line(-1,-2){10}}
\put(328,120){\line(1,-2){10}}
\put(25,195){\makebox(10,10)[t]{0}}
\put(25,170){\makebox(10,10)[t]{1}}
\put(25,145){\makebox(10,10)[t]{2}}
\put(25,120){\makebox(10,10)[t]{3}}
\put(25,90){\makebox(10,10)[t]{4}}
\put(360,195){\makebox(10,10)[t]{1$\Delta$}}
\put(360,170){\makebox(10,10)[t]{2$\Delta$}}
\put(360,145){\makebox(10,10)[t]{3$\Delta$}}
\put(360,120){\makebox(10,10)[t]{5$\Delta$}}
\put(360,90){\makebox(10,10)[t]{8$\Delta$}}
\end{picture}
\caption{The first 5 levels of the Stern-Brocot tree $\T(4/5,5/6)$.}
\label{fig:SB-tree}
\end{center}
\end{figure}
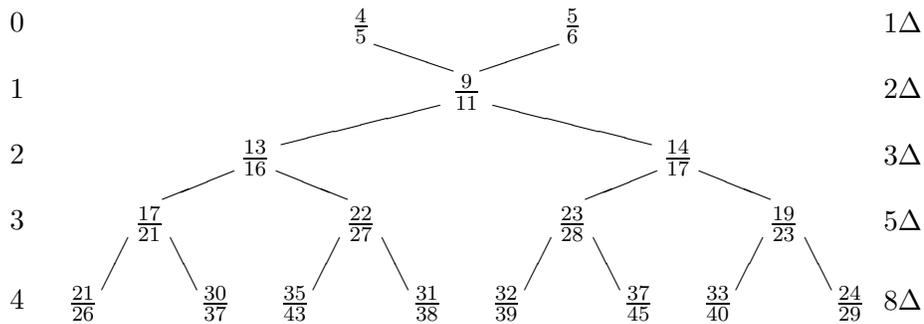

The following facts are either from \cite{Br1860,St1858}; or directly
from the definition of $\T(y_1/x_1, y_2/x_2)$; or can be shown by
easy induction:

\begin{enumerate}
\item All fractions in $\T(y_1/x_1, y_2/x_2)$ are distinct primitive vectors
and are strictly between $y_1/x_1$ and $y_2/x_2$. 

\item Each node in level $k$ is the result of the operator $\odot$
applied to a node in level $k-1$ and a node in level $\leq k-2$.

\item In each level $k$, there exists a node that is the result of the
operator $\odot$ applied to a node in level $k-1$ and a node in level $k-2$.

\item Let $\T_q$ be the subtree of $\T(y_1/x_1, y_2/x_2)$ from level 1
through level $q$. Let $\V_q$ be the set of the fractions contained 
in $\T_q$. Then $|\V_q| = 2^q-1$. 

\item For each node $y/x$ with the left child
$y'/x'$ and the right child $y''/x''$, we have $y'/x' < y/x < y''/x''$.
So the in-order traversal of $\T_q$
lists the fractions in $\V_q$ in increasing order.

\item Define the {\em size} of a node $y/x$ to be $\max\{x,y\}$.
The size of the nodes in level 0 (i.e., the two nodes $y_1/x_1$ and
$y_2/x_2$) is bounded by $1\cdot \Delta=F_1\cdot {\Delta}$ (because
both $y_1/x_1$ and $y_2/x_2$ are fractions in $\overline{P_{\Delta}}$).

\item The size of the node in level 1 (i.e., the node $y_3/x_3$)
is bounded by $2\cdot \Delta=F_2 \cdot \Delta$
(because $x_3 = x_1 + x_2 \leq 2\Delta$
and $y_3 = y_1 + y_2 \leq 2\Delta$).

\item For each $q\geq 2$, the size of level $q$ nodes is
bounded by $F_{q+1}\cdot \Delta$. (The last column in Fig \ref{fig:SB-tree}
shows the upper bounds of the size of the level $q$ fractions.)
\end{enumerate}

The lemma immediately follows from the above facts 1, 4 and 8.
\end{proof}

\begin{corollary}\label{cor:main}
Every $n$-vertex tree $T$ has a monotone drawing on a grid of size
at most $12n \times 12n$.
\end{corollary}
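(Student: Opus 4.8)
The plan is to obtain the corollary as a direct instantiation of Theorem \ref{thm:main}, whose size bound $I \leq \frac{(f+1)\cdot d}{(f+1)-d}\, n$ holds for \emph{any} valid-pair $(f,d)$. The only remaining task is therefore to exhibit a single valid-pair for which the constant $\frac{(f+1)\cdot d}{(f+1)-d}$ is at most $12$. Such valid-pairs are supplied by Lemma \ref{lemma:valid}, which guarantees that $(2^q-1, F_{q+1})$ is valid for every integer $q \geq 2$. So the whole argument reduces to choosing $q$ to make the resulting constant as small as possible, and then substituting.

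First I would take $q=2$. Since $F_3 = 3$ (from $F_0=1$, $F_1=1$, $F_2=2$, $F_3=3$), Lemma \ref{lemma:valid} yields the valid-pair $(f,d) = (2^2-1, F_3) = (3,3)$, which is precisely the pair anticipated in Subsection \ref{subsec:vector}. This pair meets the hypotheses needed to invoke Theorem \ref{thm:main}: we have $f = 3 \geq d = 3$, and the denominator $(f+1)-d = 4-3 = 1$ is positive, so the bound is finite. Substituting into the theorem gives $I \leq \frac{(f+1)\cdot d}{(f+1)-d}\, n = \frac{4\cdot 3}{1}\, n = 12n$, so Algorithm \ref{alg:final} produces a monotone drawing of $T$ on a $12n \times 12n$ grid, as claimed. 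Strictly speaking, these two steps already complete the proof, since we need only one valid-pair achieving the target constant.

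The one point worth a line of explanation — not a genuine obstacle, but the place where a small verification lives — is why the constant comes out exactly $12$ and cannot be pushed lower within this family. Writing $g(q) = \frac{2^q\, F_{q+1}}{2^q - F_{q+1}}$, I would check the first few values directly: $g(2) = 12$, $g(3) = 40/3 > 12$, and $g(4) = 16$. For the trend, since $F_{q+1}/2^q \to 0$ (the golden ratio is below $2$), we have $2^q - F_{q+1} = 2^q(1-o(1))$, so $g(q) \approx F_{q+1} \to \infty$; combined with the small-$q$ values, this confirms that the minimum over the family is attained at $q=2$. Thus $12n \times 12n$ is the best grid size this construction delivers, and the corollary follows by combining Lemma \ref{lemma:valid} (with $q=2$) and Theorem \ref{thm:main}.
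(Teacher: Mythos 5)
Your proof is correct and follows exactly the paper's argument: instantiate Lemma \ref{lemma:valid} with $q=2$ to get the valid-pair $(3,3)$, then plug it into Theorem \ref{thm:main} to obtain $I \leq \frac{4\cdot 3}{4-3}n = 12n$. The extra discussion showing $q=2$ minimizes the constant within the family $(2^q-1, F_{q+1})$ is a nice observation (and consistent with the paper's concluding remarks about improving the constant), but it is not needed for the corollary itself.
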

\begin{proof}
By Lemma \ref{lemma:valid}, $(3,3)$ is a valid-pair. Take this
valid-pair, the corollary follows from Theorem \ref{thm:main}.
\end{proof}

\section{Lower Bound}\label{sec:lower}

Let $T_0$ be a tree with root $r$ and 12 edge-disjoint paths
$P_1,P_2,...,P_{12}$, and each $P_i$ has $\frac{n}{12}$ vertices. 
In this section, we show that any monotone drawings of $T_0$ must use
a grid of size $\Omega(n) \times \Omega(n)$. Hence,
our result in Section \ref{sec:tree} is asymptotically optimal.

\begin{lemma}\label{lemma:lower_bound}
There exists a tree $T_0$ with $n$ vertices such that every monotone drawing
of $T_0$ must use an $\Omega(n) \times \Omega(n)$ grid.
\end{lemma}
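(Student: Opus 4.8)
The plan is to put the root $r$ at the origin and bound the width and the height of the bounding box separately, using only two facts: every edge is a nonzero integer vector (so a nonzero coordinate has absolute value $\ge 1$), and the unique path joining the far endpoints of any two legs is monotone. First I would note that each leg $P_i$ is itself a monotone path (it is the connecting path of any two of its own vertices), so by Property \ref{Prop:monotone-1} it has a scope $C_i=[\Angle(e^i_{\min}),\Angle(e^i_{\max})]$ which is an arc of angular width $<180^{\circ}$. All of the argument is about where these twelve arcs can sit relative to the four axis directions $0^{\circ},90^{\circ},180^{\circ},270^{\circ}$.

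The key structural claim I would prove is: \emph{at most one leg has $90^{\circ}\in C_i$, and at most one leg has $270^{\circ}\in C_i$.} Suppose two legs $P_i,P_j$ both had $90^{\circ}\in C_i\cap C_j$. The path joining the leaf of $P_i$ to the leaf of $P_j$ passes through $r$; traversed from $P_i$'s leaf it consists of the reversals of $P_i$'s edges followed by $P_j$'s edges, so the wedge enclosing its edge directions must contain the reversed scope $C_i+180^{\circ}\ni 270^{\circ}$ together with $C_j\ni 90^{\circ}$. A wedge containing the antipodal pair $90^{\circ},270^{\circ}$ has width $\ge 180^{\circ}$, so by Property \ref{Prop:monotone-1} this path is not monotone, contradicting that the drawing is monotone. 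The same reasoning rules out two legs sharing $270^{\circ}$, and no single leg can contain both $90^{\circ}$ and $270^{\circ}$ because $C_i$ has width $<180^{\circ}$; hence at most two legs in total contain a vertical direction in their scope.

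I would then convert this into a width bound. If a leg $P_i$ has neither $90^{\circ}$ nor $270^{\circ}$ in $C_i$, then the arc $C_i$ lies entirely in the open right half $(-90^{\circ},90^{\circ})$ or entirely in the open left half $(90^{\circ},270^{\circ})$. In the first case every edge of $P_i$ has $x$-coordinate $\ge 1$, in the second every edge has $x$-coordinate $\le -1$; either way the $x$-coordinates are strictly monotone along $P_i$, so the horizontal extent of $P_i$ is at least its number of edges, which is $\Omega(n)$. By the claim at most two legs fail this hypothesis, so at least ten legs force horizontal extent $\Omega(n)$, giving width $\Omega(n)$. Running the identical argument with $x,y$ interchanged and the horizontal directions $0^{\circ},180^{\circ}$ playing the role of $90^{\circ},270^{\circ}$ yields height $\Omega(n)$, and the two bounds together give the $\Omega(n)\times\Omega(n)$ grid.

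The step I expect to be the main obstacle is isolating the right structural claim: one has to recognize that the only way a leg can have small extent in a coordinate is for its scope to \emph{straddle} the corresponding axis direction (a space-saving ``zig-zag'' leg), and that pairwise monotonicity forbids two legs from straddling the same axis direction via the antipodal obstruction above. Once this is seen, the remainder is routine integer-grid counting. The subtleties to verify carefully are that the scope $C_j$ really is contained in the enclosing wedge of the connecting path (so that $90^{\circ}\in C_j$ genuinely forces $90^{\circ}$ into that wedge), and the boundary cases in which an edge is exactly vertical or exactly horizontal, which are absorbed by using the closed scope and the strict width bound $<180^{\circ}$.
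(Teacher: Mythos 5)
Your proof is correct, and it takes a genuinely different route from the paper's. The paper begins with a pigeonhole step: of the twelve edges incident to the root, three of them, say $e_1,e_2,e_3$ in ccw order, must lie in a common quadrant. It then shows the \emph{middle} leg $P_2$ is sandwiched: for each vertex $u_i$ of $P_2$, the monotone path from $P_1$'s second vertex to $u_i$ contains a reversed edge of angle $\geq 180^{\circ}$, forcing $\Angle(u_{i-1},u_i)>0^{\circ}$, while the monotone path from $u_i$ to $P_3$'s second vertex contains an edge of angle $\leq 90^{\circ}$, forcing $\Angle(u_{i-1},u_i)<90^{\circ}$; thus every edge of $P_2$ points strictly into the open first quadrant, and that single leg already fills an $\frac{n}{12}\times\frac{n}{12}$ grid. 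You instead prove a pairwise exclusion principle (no two legs' scopes contain the same axis direction, and no single scope contains two antipodal directions) and then decouple the width bound from the height bound via half-plane containment of scopes. Both arguments run on the same engine, namely Property \ref{Prop:monotone-1} applied to paths through the root joining two legs, but your organization avoids the pigeonhole entirely and yields a stronger structural fact: at most four legs can straddle an axis direction, so at least eight of the twelve legs are strictly monotone in both coordinates, and your proof would work verbatim if $T_0$ had only five legs (the paper's pigeonhole needs at least nine legs to place three first-edges in one quadrant). What the paper's version buys is brevity and concreteness: one explicit leg is pinned into a quadrant, giving both dimensions at once. The step you flagged as delicate --- that the enclosing wedge of the leaf-to-leaf path contains all of $C_j$ --- does hold and deserves the one-line justification: that wedge has width $<180^{\circ}$ and contains the angles of $P_j$'s two extremal edges, and of the two arcs bounded by those two angles only $C_j$ has width $<180^{\circ}$, so the wedge must contain $C_j$ (and symmetrically $C_i+180^{\circ}$), which is exactly what the antipodal obstruction requires.
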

\begin{proof}
Let $e_i$ ($1 \leq i \leq 12$) be the first edge in $P_i$
(see Fig \ref{Fig:lower_bound} (a)).
Let $\Gamma$ be any monotone drawing of $T_0$. Without loss of generality,
we assume the root $r$ is drawn at the origin $(0,0)$.

\begin{figure}[!htb]
\begin{center}
\includegraphics[width=0.75\textwidth, angle =0]{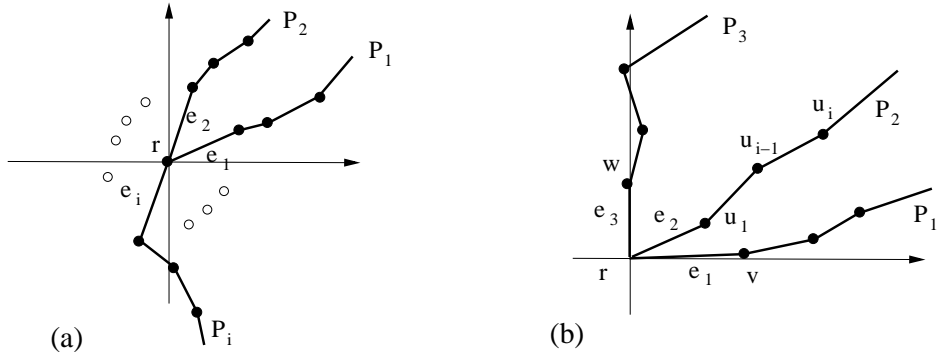}
\centering
\caption{(a) A tree $T_0$ with 12 edge-disjoint paths; (b) the proof
of Lemma \ref{lemma:lower_bound}.}
\label{Fig:lower_bound}
\end{center}
\vspace{-0.2in}
\end{figure}

By pigeonhole principle, at least three edges $e_i$ must be drawn
in the same quadrant. Without loss of generality, we assume the
edges $e_1,e_2$ and $e_3$ are drawn in the first quadrant in ccw order.
Let $e_1=(r,v)$, $e_3=(r,w)$, and $r=u_0,u_1, \ldots, u_k$ ($k=n/12$) be the
vertices of $P_2$. Thus $e_2 = (r,u_1)$, and $0^{\circ} \leq \Angle(r,v)
< \Angle(r,u_1) < \Angle(r,w) \leq 90^{\circ}$ (see Fig \ref{Fig:lower_bound} (b)). 

Consider the tree path $Q_1$ from $v$ to $u_i$ (for any index $1\leq i \leq k$).
Let $e^1_{\min}$ and $e^1_{\max}$ be the two extremal edges of $Q_1$ with
$\Angle(e^1_{\min}) < \Angle(e^1_{\max})$. Since $Q_1$  is monotone, we must have
$\Angle(e^1_{\max}) - \Angle(e^1_{\min}) < 180^{\circ}$ by Property
\ref{Prop:monotone-1}. 
Note that $\Angle(e^1_{\max}) \geq \Angle(v,r) \geq 180^{\circ}$.
This implies $\Angle(u_{i-1},u_i) \geq \Angle(e^1_{\min}) >
\Angle(e^1_{\max}) -180^{\circ}\geq 0^{\circ}$.

Now consider the tree path $Q_2$ from $u_i$ to $w$. 
Let $e^2_{\min}$ and $e^2_{\max}$ be the two extremal edges of $Q_2$ with
$\Angle(e^2_{\min}) < \Angle(e^2_{\max})$. Since $Q_2$ is monotone, we must have
$\Angle(e^2_{\max}) - \Angle(e^2_{\min}) < 180^{\circ}$ by Property
\ref{Prop:monotone-1}. 
Note that $\Angle(e^2_{\min}) \leq \Angle(r,w) \leq 90^{\circ}$.
This implies $\Angle(u_i,u_{i-1}) \leq \Angle(e^2_{\max}) <
\Angle(e^2_{\min}) +180^{\circ}\leq 270^{\circ}$.
Hence, $\Angle(u_{i-1},u_i) < 90^{\circ}$. 

Thus, for every edge $(u_{i-1},u_i) \in P_2$, $0^{\circ} < \Angle(u_{i-1},u_i)
< 90^{\circ}$. Let $(x(u_{i-1}),y(u_{i-1}))$ and $(x(u_i),y(u_i))$ be the two points 
in the drawing $\Gamma$ corresponding to $u_{i-1}$ and $u_i$, respectively. 
Because $0^{\circ} < \Angle(u_{i-1},u_i) < 90^{\circ}$,
we have $x(u_i) - x(u_{i-1}) \geq 1$ and $y(u_i) - y(u_{i-1}) \geq 1$.
So, in order to draw $P_2$, we need a grid of size at least
$\frac{n}{12} \times \frac{n}{12}$ in the first quadrant.
\end{proof}

\section{Conclusion}\label{sec:conc}
In this paper, we showed that any $n$-vertex tree has a monotone drawing on
a $12n \times 12n$ grid. The drawing
can be constructed in $O(n)$ time. We also described a tree
$T_0$ and showed that any monotone drawing of $T_0$ must
use a grid of size at least $\frac{n}{12} \times \frac{n}{12}$.
So the size of our monotone drawing of trees is asymptotically optimal.
 
It is moderately interesting to close the gap between the lower and the
upper bounds on the size of monotone drawing for trees. To reduce
the constant in the drawing size,
one possible approach is to improve Lemma \ref{lemma:valid},
whose proof is not tight.
In the Stern-Brocot tree $\T$, the sizes of the nodes near the leftmost and
the rightmost path of $\T$ are (much) smaller than the bound stated in
the proof of Lemma \ref{lemma:valid}. So it is possible to prove that
$(2^q -1 +t, F_{q+1})$ is a valid-pair for some integer $t$ ($t$ depends
on $q$). By using these better valid-pairs in Theorem \ref{thm:main},
it is possible to reduce the constant in the size of the drawing.

\end{document}